\documentclass[10pt]{article}
\usepackage{amsmath,amssymb,amsthm,times}
\usepackage[margin=1in]{geometry}
\usepackage{enumerate,multirow,tikz}
\definecolor{Wongblue}{HTML}{0072B2}
\definecolor{Wongorange}{HTML}{E69F00}
\usepackage{mathtools, cite}
\usepackage{float}
\usepackage{hyperref}
\usepackage{bbm}
\usepackage[noend]{algpseudocode}
\usepackage{algorithm}
\usepackage[shortlabels]{enumitem}
\mathtoolsset{showonlyrefs}

\newcommand{\indicator}[1]{\mathbbm{1}_{#1}}
\newcommand{\Vol}[0]{\operatorname{Vol}}
\newcommand{\Var}[0]{\operatorname{Var}}
\newcommand{\med}[0]{\operatorname{median}}
\newcommand{\R}[0]{\mathbb{R}}
\newcommand{\E}[0]{\mathbb{E}}

\newcommand{\OO}[0]{\mathcal{O}}
\newcommand{\FF}[0]{\mathcal{F}}
\DeclareMathOperator*{\argmin}{arg\,min}

\renewcommand{\complement}{\mathsf{c}}

\newtheorem{theorem}    		  {Theorem}
\newtheorem{lemma}      [theorem] {Lemma}

\theoremstyle{definition}
\newtheorem{example}    		  {Example}

\begin{document}
\title{Performance bounds for nearest neighbor search with $k$-d trees%
\thanks{The authors are grateful to the National Science Foundation for funding under grant CCF-2217058.}}
\author{Marco Bazzani\thanks{University of Washington,
\texttt{mbazzani@cs.washington.edu}.} \and Sanjoy Dasgupta\thanks{University of
California San Diego, \texttt{dasgupta@eng.ucsd.edu}.}}
\date{}
\maketitle

\begin{abstract}
The $k$-d tree is one of the oldest and most widely used data structures for
nearest neighbor search. It partitions Euclidean space into axis-aligned
rectangular cells. There are two standard ways to find the nearest neighbor to a query in a
$k$-d tree.
Defeatist search returns the closest data point in the query's cell, while
comprehensive search also searches other cells as needed to guarantee it finds
the nearest neighbor.
Both strategies are commonly believed to perform poorly in high
dimensions, but there have been few theoretical results explaining this.
We prove non-asymptotic bounds on the runtime of comprehensive search and
the accuracy of defeatist search.
Under mild distributional assumptions, when the dimension $d$ is at least
polylogarithmic in the number of data points, defeatist search is no more
likely to return the nearest neighbor than random guessing, and
comprehensive search visits every cell with high probability.
We also show that on uniform data, with high probability, comprehensive search
visits at most $2^{\OO(d)}$ cells when each cell contains at least
logarithmically many data points, and defeatist search returns the nearest
neighbor when each cell additionally contains at least $2^{\OO(d \log d)}$ data
points.
Finally, for arbitrary absolutely continuous distributions, we upper bound the
expected distance between the query and the point returned by defeatist search.
\end{abstract}

\section{Introduction}

The $k$-d tree, introduced by Bentley \cite{Bentley1975}, is one of the oldest
and most widely used data structures for nearest neighbor search \cite{Vempala2012, AryaMNSW1998, RamS2019, MujaL2014}. 
Given $n$ points in $\R^d$, a $k$-d tree is built by recursively partitioning space along
coordinate axes (Figure~\ref{fig:buildtree}). At each node, one coordinate is
selected and the data is split at the median value along
that coordinate, so that half of the data points are on each side of the split.
Splitting continues until each cell contains fewer than $2n_0$
points, where $n_0$ is a parameter called the minimum leaf size. 
In practice, $n_0$ is usually chosen to be a constant (for instance, the popular
scikit-learn implementation uses a default of 40~\cite{sklearnKDTree}) or growing
polylogarithmically in the number of data points.
Common strategies for choosing the split coordinate include round-robin cycling,
selecting the coordinate along which the data has maximum variance or range, 
and random selection. We study the original round-robin variant.
The result of this procedure can be viewed either as a tree data structure
(Figure~\ref{fig:buildtree}) or as a hierarchical partition of Euclidean space
into rectangular cells (Figure~\ref{fig:kdtree}).

\begin{figure}[H]
\begin{algorithm}[H]
	\begin{algorithmic}
	\Function{build\_tree}{$S, n_0, l$}
	\If{$|S| < 2 n_0$}
		\State \Return $S$
	\EndIf
	\State $d \gets \text{ambient dimension of } S$
	\State $\alpha \gets (l \bmod d) + 1$
	\State $s \gets \med\{p_\alpha : p \in S\}$
	\State $\text{rule}(x) \gets (x_\alpha \le s)$
	\State $\text{left} \gets \Call{build\_tree}{\{ p \in S : \text{rule}(p) = \mathrm{true} \}, n_0, l+1}$
	\State $\text{right} \gets \Call{build\_tree}{\{ p \in S : \text{rule}(p) = \mathrm{false} \}, n_0, l+1}$
	\State \Return $(\text{rule}, \text{left}, \text{right})$
	\EndFunction
	\end{algorithmic}
\end{algorithm}
\caption{Procedure for building a $k$-d tree with minimum leaf size $n_0$ from a set of
data points $S$. Invoked with $l = 0$.}
\label{fig:buildtree}
\end{figure}

\begin{figure}[H]
\centering
\begin{tikzpicture}
  \draw[thick] (0,0) rectangle (6,6);

  \draw[black, thick] (2.26,0) -- (2.26,6);

  \draw[Wongblue, thick] (0,2.92) -- (2.26,2.92);
  \draw[Wongblue, thick] (2.26,2.13) -- (6,2.13);

  \draw[Wongorange, thick] (0.89,0) -- (0.89,2.92);
  \draw[Wongorange, thick] (1.36,2.92) -- (1.36,6);
  \draw[Wongorange, thick] (2.97,0) -- (2.97,2.13);
  \draw[Wongorange, thick] (3.31,2.13) -- (3.31,6);

  \foreach \x/\y in {
    0.3/3.13, 0.46/1.42, 0.6/0.66, 0.79/5.44,
    0.89/1.99, 1.27/0.75, 1.36/4.62, 1.66/2.92,
    1.75/5.23, 2.26/4.72, 2.39/2.13, 2.75/1.63,
    2.97/0.88, 3.1/4.52, 3.19/5.69, 3.31/3.6,
    4.44/0.36, 4.64/2.29, 5.22/1.9, 5.66/2.49
  } {
    \fill (\x,\y) circle (2.5pt);
  }

  \node[font=\large] at (3.47,5.5) {$\times$};
  \node[anchor=south west, font=\small] at (3.57,5.5) {$q$};
\end{tikzpicture}
\caption{A $k$-d tree partitioning of $[0,1]^2$ with leaf size $n_0 = 2$.
The dots are data points, the cross marks a query point $q$.
The nearest neighbor to $q$ lies in an adjacent cell.
The first split is black, the next level of splits is blue, and the last is orange.}
\label{fig:kdtree}
\end{figure}

There are two standard ways to use a $k$-d tree for nearest neighbor search. The
first is \emph{defeatist search} (Figure~\ref{fig:defeatist}), which finds
the cell that the query $q$ lands in and returns the closest point
there. This requires at most $2n_0$ distance computations, but the true nearest
neighbor to $q$ may lie in a different cell. For example, this occurs in Figure~\ref{fig:kdtree}. 
\emph{Comprehensive search} (Figure~\ref{fig:comprehensive}) 
addresses this by backtracking: whenever a
sibling cell intersects the ball centered at $q$ with radius equal to
the distance of the closest point found so far, that sibling is searched as well. This
guarantees that the nearest neighbor is found, but may require many more
distance computations.

\begin{figure}[H]
\begin{algorithm}[H]
	\begin{algorithmic}
	\Function{defeatist\_search}{node, $q$}
	\If{node is a leaf}
		\State \Return $\displaystyle \argmin_{p \in \text{points}(\text{node})} \|p - q\|$
	\EndIf
	\If{$\text{rule}(\text{node})(q)$}
		\State \Return \Call{defeatist\_search}{\text{left}(\text{node}), $q$}
	\Else
		\State \Return \Call{defeatist\_search}{\text{right}(\text{node}), $q$}
	\EndIf
	\EndFunction
	\end{algorithmic}
\end{algorithm}
\caption{Defeatist search procedure for the nearest neighbor to a query point $q$ in a
$k$-d tree. Invoked with node set to the root of the tree.}
\label{fig:defeatist}
\end{figure}

\begin{figure}[H]
\begin{algorithm}[H]
	\begin{algorithmic}
	\Function{comprehensive\_search}{node, $q$, best}
	\If{node is a leaf}
		\State \Return $\displaystyle
		\argmin_{p \in \text{points}(\text{node}) \cup \{\text{best}\}} \|p - q\|$
	\EndIf
	\If{$\text{rule}(\text{node})(q)$}
		\State $(\text{first}, \text{second}) \gets (\text{left}(\text{node}), \text{right}(\text{node}))$
	\Else
		\State $(\text{first}, \text{second}) \gets (\text{right}(\text{node}), \text{left}(\text{node}))$
	\EndIf
	\State $\text{best} \gets \Call{comprehensive\_search}{\text{first}, q, \text{best}}$
	\If{$\text{cell}(\text{second})$ intersects $B_{\|q - \text{best}\|}(q)$}
		\State $\text{best} \gets \Call{comprehensive\_search}{\text{second}, q, \text{best}}$
	\EndIf
	\State \Return best
	\EndFunction
	\end{algorithmic}
\end{algorithm}
\caption{Comprehensive search procedure for the nearest neighbor to a query point $q$
in a $k$-d tree. Invoked with node set to the root of the tree and
$\text{best} = \text{null}$.}
\label{fig:comprehensive}
\end{figure}

It has long been empirically observed that $k$-d trees
perform poorly on high dimensional data \cite{Sproull1991,
WeberSB1998, NeneN1997}. Despite this empirical evidence, the theoretical
picture has remained incomplete.
In this paper we prove non-asymptotic bounds on the performance of defeatist and
comprehensive search as a function of $d$ and $n$. 
Specifically, we prove two lower bounds for $k$-d trees constructed from $n$ data
points in $d$ dimensions with leaf size $n_0$. For absolutely
continuous product distributions on $[0,1]^d$ with marginal density bounded
below, defeatist search returns the nearest neighbor with probability at
most $\OO(n_0/n)$ once $d \in \Omega(\log^3 n)$.
(Theorem~\ref{thm:defeatistLowerBound}).
For absolutely continuous distributions on $[0,1]^d$ with bounded density,
comprehensive search visits every leaf cell with high probability for $d \in
\Omega(\log n)$ (Theorem~\ref{thm:comprehensiveLowerBound}).

We also prove complementary upper bounds for uniform data on $[0,1]^d$. Defeatist
search returns the nearest neighbor with high probability when
$n_0$ is at least $d^{\OO(d)}$ and $n_0 \in \Omega(\log n)$
(Theorem~\ref{thm:defeatistUpperBound}). Comprehensive search visits at most
$2^{\OO(d)}$ cells with high probability when $n_0 = \Omega(\log n)$
(Theorem~\ref{thm:comprehensiveUpperBound}). Our upper bounds apply only to
uniform data, which is a limitation. However, uniform data is not a setting
where $k$-d trees perform unusually well. Our lower bounds are strongest on
uniform data, and Example \ref{ex:goodperformance} in Section
\ref{sec:lowerbounds} where $k$-d trees perform well in high dimensions
is highly non-uniform. This suggests uniform data is in
fact the worst case for $k$-d tree performance, though it is also the easiest
to analyze. 

Finally, for general absolutely continuous distributions on $[0,1]^d$, we bound
the expected diameter of the leaf cell containing the query by
$\OO\left(d \left(\frac{n_0}{n}\right)^{1/d}\right)$
(Theorem~\ref{thm:smallDiam}). In particular, this bounds the distance between
the query and the point returned by defeatist search.

We are aware of two prior theoretical results on $k$-d tree performance.
Friedman, Bentley, and Finkel \cite{FriedmanBF1977} gave a heuristic argument
that for fixed $d$, the number of points examined by comprehensive search
remains bounded as $n \to \infty$, but the argument does not extend to growing
$d$. Panigrahy \cite{Panigrahy2008} considers a planted-query model on uniform
data in $[0,1]^d$. A random data point $p$ is selected, and a query point $q$
is planted that is about $c$ times closer to $p$ than to any other data point.
Specifically, $q$ is at distance about $r/c$ from $p$, where $r$ is the
distance from a random point in $[0,1]^d$ to its nearest data point. In this
setting, defeatist search returns the nearest neighbor with probability
at most $e^{-\Omega(d/c)}$. Panigrahy's analysis applies for $d \le \log n$,
disjoint from the $d \in \Omega(\log^3 n)$ of
Theorem~\ref{thm:defeatistLowerBound}. 
At $d=\log n$, Panigrahy's bound becomes
$n^{-\Omega(1/c)}$. In our regime, we obtain the bound $\OO(n_0/n)$.
Our result also holds for arbitrary $q$ and on a broader class of distributions.

Our lower bounds assume density bounded below or above on all of $[0,1]^d$. Some
similar assumption seems to be necessary. Example~\ref{ex:goodperformance} gives a
distribution on $[0,1]^d$ whose marginal densities vanish on most of $[0,1]$
and are large on the rest, for which both defeatist and comprehensive search
perform extremely well even when $d = \log(n/n_0)$.

More broadly, our distributional assumptions can be interpreted as a full-dimensionality
requirement. Since smooth manifolds in $\R^d$ have Lebesgue measure zero, a
distribution concentrated near such a manifold would have very high density near
the manifold and very low density elsewhere, placing it outside the scope of our
theorems. This is consistent with the empirical picture. Nene and Nayar
\cite{NeneN1997} find that on real-world data with low intrinsic dimension, 
$k$-d trees can remain effective for nearest neighbor search even when the ambient
dimension is large. In contrast, $k$-d trees tend to perform poorly on synthetic,
full-dimensional data \cite{Sproull1991, WeberSB1998}. Characterizing $k$-d tree
performance for distributions with low intrinsic dimension remains open.
Dasgupta and Freund \cite{DasguptaF2008} show that there exist data sets in
$\R^d$ with Assouad dimension $\OO(\log d)$ on which a $k$-d tree still requires
$d$ levels to halve the cell diameter. On the other hand, Vempala
\cite{Vempala2012} shows that $k$-d trees on randomly rotated data adapt to
doubling dimension, which intuitively suggests that $k$-d trees adapt to
doubling dimension in most non-pathological cases.

\section{Preliminaries}

We use the following definitions and notation.
The median of $n$ points is the $\lceil n/2 \rceil$-th order statistic.
The aspect ratio of a rectangle $\prod_{i=1}^d [a_i, b_i]$ is
$\frac{\max_i (b_i - a_i)}{\min_j (b_j - a_j)}$.
For $x \in \R^d$, let $B_r(x) = \{ y \in \R^d : \|x-y\| < r\}$ denote the open
ball of radius $r$ centered at $x$, and let $\log$ denote the base-$2$ logarithm.

Throughout, we will assume for simplicity that $\log(n/n_0)$ is a positive integer and that the
data is drawn from an absolutely continuous distribution. As a result, with
probability $1$, the resulting $k$-d tree has depth exactly $\log(n/n_0)$,
with $n 2^{-l}$ points at every node of depth $l$.

\section{Lower Bounds} \label{sec:lowerbounds}

The results below show that both defeatist and
comprehensive search perform poorly when the data dimension is at
least polylogarithmic in the number of data points. In particular this
applies in the typical regime where $n$ is polynomial in $d$.

\subsection{Comprehensive Search Lower Bound}

\begin{theorem} \label{thm:comprehensiveLowerBound}
Let $q \in [0,1]^d$.
Draw $n$ data points iid from any absolutely continuous
distribution on $[0,1]^d$ with density bounded above by $M$, and build
a $k$-d tree with minimum leaf size $n_0$. Comprehensive search on $q$
visits every leaf cell with probability at least
\begin{align}
  1 - n M \left(\frac{2 \pi e \log(n/n_0)}{d}\right)^{d/2}
  \qquad\text{whenever}\qquad
  d \ge \log(n/n_0).
\end{align}
\end{theorem}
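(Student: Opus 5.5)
Since $d \ge \log(n/n_0) = L$, the depth of the tree is at most $d$, so every root-to-leaf path splits each coordinate at most once. The plan is to show that if the nearest-neighbor distance $r = \min_i \|q - p_i\|$ is large, then the ball $B_r(q)$ meets every cell, and that $r$ is large with the stated probability.

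\textbf{Step 1 (geometry).} We claim that every leaf cell meets $B_r(q)$ whenever $r > \sqrt{L}$. Consider a leaf cell $C$. Along its root-to-leaf path, the coordinates $1,\dots,L$ are each split exactly once (round robin with $L \le d$). Hence $C = \prod_{i\le L} I_i \times [0,1]^{d-L}$, where each $I_i$ is an interval. Let $y$ be the point obtained from $q$ by replacing $q_i$ with the nearest point of $\overline{I_i}$ for each $i \le L$, and keeping $y_i = q_i$ for $i > L$. Then $\|q - y\|^2 \le \sum_{i\le L} |q_i - y_i|^2 \le L$. Points of $C$ arbitrarily close to $y$ exist, so $C$ intersects $B_r(q)$ when $r > \sqrt L$.

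\textbf{Step 2 (search behaviour).} At any moment, the running best satisfies $\|q-\text{best}\| \ge r$, because $r$ is the true nearest-neighbor distance. So whenever the algorithm tests a sibling, the ball $B_{\|q-\text{best}\|}(q)$ contains $B_r(q)$. The sibling cell contains some leaf, which meets $B_r(q)$ by Step 1, so the intersection test always succeeds. By induction on the tree, every node is visited and hence every leaf.

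\textbf{Step 3 (probability).} It remains to bound $\Pr[r \le \sqrt L]$. By a union bound it is at most $n\, M \Vol(B_{\sqrt L})$. Using $\Vol(B_\rho) = \pi^{d/2}\rho^d/\Gamma(d/2+1)$ together with $\Gamma(d/2+1) \ge (d/(2e))^{d/2}$, we get $\Vol(B_{\sqrt L}) \le (2\pi e L/d)^{d/2}$. This gives the stated bound.

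The closed versus open ball and the strict versus non-strict inequality affect only a measure-zero event. The main obstacle is the geometric Step 1, namely ensuring each coordinate is split at most once per path. This is exactly where $d \ge \log(n/n_0)$ enters; once it holds, everything else is routine.
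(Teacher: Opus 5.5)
Your proposal is correct and follows the paper's proof essentially step for step. Like the paper, you use that each coordinate is split at most once to get $d(q,C)\le\sqrt{\log(n/n_0)}$ for every leaf cell $C$, conclude that every sibling test succeeds whenever the nearest-neighbor distance exceeds this bound, and finish with a union bound and $\Gamma(d/2+1)\ge (d/(2e))^{d/2}$. Your Step 2 spells out the induction over sibling tests that the paper states in one line, and your Step 1 handles the open-ball boundary more carefully than the paper does.
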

\begin{proof}
Let $p_1, \dots, p_n$ denote the data points and let $r = \sqrt{\log(n/n_0)}$.
The tree has depth at most $\log(n/n_0)$, so since $d \ge \log(n/n_0)$,
each coordinate is split at most once.
For unsplit coordinates, every leaf cell spans all of $[0, 1]$, so $q$ is
already in the cell along these coordinates.
For split coordinates, both $q$ and the cell are contained in $[0, 1]$, so
the distance along each is at most $1$.
Since there are at most $\log(n/n_0)$ split coordinates,
$d(q, C)^2 \le \log(n/n_0) = r^2$ for every leaf cell $C$.
In particular, every leaf cell intersects $B_r(q)$.
If additionally $\min_i \|p_i - q\| > r$, then
$B_{\min_i \|p_i - q\|}(q)$ intersects every leaf cell and comprehensive
search must visit all of them.
Since $p_i$ has density bounded by $M$,
\begin{align}
P(\|p_i - q\| \le r)
\le M \cdot \Vol(B_r(0))
= M \cdot \frac{(\pi \log(n/n_0))^{d/2}}{\Gamma(d/2+1)}
\le M \left(\frac{2\pi e \log(n/n_0)}{d}\right)^{d/2}
\end{align}
where the last inequality uses $\Gamma(d/2+1) \ge \left(\tfrac{d}{2e}\right)^{d/2}$.
A union bound over $p_1, \dots, p_n$ completes the proof.
\end{proof}
\subsection{Defeatist Search Lower Bound}

\begin{theorem} \label{thm:defeatistLowerBound}
There exist $c_1, c_2 > 0$ such that the following holds.
Let $\mu$ be a product distribution on $[0,1]^d$ whose
marginals have densities bounded below by $m > 0$.
Let $q \in [0,1]^d$.
Draw $n$ data points iid from $\mu$, and build a $k$-d tree
with minimum leaf size $n_0$. Defeatist search on $q$ returns the
nearest neighbor with probability at most
\begin{align}
  c_2 \frac{n_0}{n}
  \qquad\text{whenever}\qquad
  d \ge \frac{c_1}{m}\log^3 n.
\end{align}
\end{theorem}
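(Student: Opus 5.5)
We want to show that defeatist search is essentially no better than random guessing. The natural comparison is the fraction of data in the query's leaf, which is $n_0/n$ up to a factor of $2$. Let $L = \log(n/n_0)$ be the depth. Since $d \ge \frac{c_1}{m}\log^3 n \ge L$, each coordinate is split at most once along the root-to-leaf path of $q$. So the leaf cell of $q$ is $\prod_{i \in S} I_i \times \prod_{i\notin S}[0,1]$, where $S$ is the set of the first $L$ coordinates and each $I_i$ is a half-interval $[0,s_i]$ or $(s_i,1]$ cut at an empirical median of a subsample.

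\textbf{Symmetrization idea.} Condition on the tree structure, i.e.\ on which points land in which leaves and on the split values. For a product distribution, the coordinates outside $S$ are independent of everything that determined the partition. So for each data point $p$, the quantity $\sum_{i\notin S}(p_i-q_i)^2$ is an i.i.d.\ (across points) random variable with the same law regardless of the leaf, since the split coordinates and the off-split coordinates are independent under $\mu$.

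Write $\|p-q\|^2 = A_p + B_p$, where $A_p = \sum_{i\in S}(p_i-q_i)^2 \le L$ and $B_p = \sum_{i\notin S}(p_i-q_i)^2$. The $B_p$ are i.i.d.\ given the tree. I would show that the $B_p$ fluctuate on a scale that swamps the variation in $A_p$. Each term $(p_i-q_i)^2$ has variance at least of order $m$ because the marginal density is at least $m$. This requires a short calculation: the mass of the marginal density on any interval of length $1/4$ is at least $m/4$, which forces spread in $(p_i-q_i)^2$. Hence $\Var(B_p) \gtrsim m(d-L) \gtrsim m d$, so the standard deviation of $B_p$ is at least of order $\sqrt{md}$. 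This is much larger than $L \le \log n$ once $d \ge c_1 \log^3 n/m$.

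\textbf{Anti-concentration step.} Given the tree and all $A_p$, the nearest neighbor is the argmin of $A_p + B_p$ with $B_p$ i.i.d. If the $A_p$ were all equal, the argmin would be uniform over the $n$ points, so it would lie in $q$'s leaf (which holds about $n_0$ to $2n_0$ points) with probability $O(n_0/n)$. In general I would bound
\[
P(\text{argmin} = p) \le P\Bigl(B_p < \min_{p'\ne p} B_{p'} + L\Bigr).
\]
I would then compare this with $1/n$. The key estimate is that the lower tail of the i.i.d.\ minimum is not too sensitive to shifts of size $L$: with $\min$ of $n$ copies, the probability that $B_p$ falls in $[\min, \min+L]$ should be $O(1/n)$ times a constant.

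This is the main obstacle. It needs a Berry--Esseen or local-CLT control of the lower tail of $B_p$ at the scale of the minimum, which sits about $\sqrt{2\log n}$ standard deviations below the mean. Near that point a Gaussian-like tail has log-derivative about $\sqrt{\log n}/\sqrt{md}$, so a shift of $L \lesssim \log n$ changes the tail probability by a factor $\exp(O(\log^{3/2} n/\sqrt{md}))$. This factor is $O(1)$ exactly when $d \gtrsim \log^3 n/m$, which explains the cubic exponent. The CLT error is not fine enough at moderate deviations, so I would first try a Cramér-type moderate deviation theorem. The summands $(p_i-q_i)^2$ are bounded and independent (not identically distributed), which gives relative accuracy in the tail ratio for deviations up to $o(d^{1/6})$ standard deviations. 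Since $\sqrt{\log n} \ll d^{1/6}$ in our regime, this should suffice. From it I would derive that $P(B_p \le t+L)/P(B_p\le t) \le C$ for all relevant $t$ (with a separate trivial treatment of the far tail event that the minimum lies unusually low, which has probability $O(n_0/n)$).

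\textbf{Conclusion.} Summing over the at most $2n_0$ points in $q$'s leaf gives $P(\text{NN in leaf}) \le 2n_0 \cdot C/n$. The point that defeatist search returns is the NN if and only if the NN lies in $q$'s leaf, so this yields the claimed bound $c_2 n_0/n$.
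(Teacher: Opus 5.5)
Your proposal is correct and follows essentially the paper's argument. The paper conditions on the unsplit coordinates and uses exchangeability of the split coordinates to get $P(p_i \in C(q)\mid \mathcal U) = n_0/n$ together with a candidate set $\FF$, whereas you condition on the split coordinates and use that the $B_p$ are i.i.d.; both routes reduce to showing $n_0\,\E[F(M+L)] = O(n_0/n)$, by discarding the event $F(M) < 1/n^2$ and applying a moderate-deviation tail-ratio bound $F(t+L)/F(t) \le C$ (the paper's Lemma~\ref{lem:cdfratio}, built on Chen--Fang--Shao), so the proposal matches the paper up to this dual choice of conditioning. One small slip to fix: the moderate-deviation range is $o(v^{1/6})$ standard deviations with $v \gtrsim md$, i.e.\ $(md)^{1/6}$ rather than $d^{1/6}$, and this is exactly where the factor $1/m$ in the hypothesis comes from.
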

\begin{proof}
Let $U$ be the projection onto the unsplit coordinates and $S$ be the projection
onto the split coordinates.

Write $\mathcal U = (U(p_1), \ldots, U(p_n))$ for
the unsplit coordinates of all $n$ data points, and condition on
$\mathcal U$. Then the split coordinates are still iid. Let $C(q)$ be the cell
that $q$ lands in. By symmetry of the $p_i$ and the fact that exactly $n_0$
of the $p_i$ end up in $C(q)$,
\begin{align}
	P(p_i \in C(q) \mid \mathcal U) = n_0/n
\end{align}
for any $i \in [n]$.

Now we identify a small set of candidate indices, determined by $\mathcal U$
alone, that must contain the nearest neighbor index. Because at most
$\log n$ coordinates are split along any root-to-leaf path and each contributes
at most $1$ to squared distance, $\|S(p_i - q)\|^2 \le \log n$ for every $i$.
Hence if $j^* = \argmin_j \|p_j - q\|^2$,
$$\|U(p_{j^*} - q)\|^2 \le \|p_{j^*} - q\|^2 \le \|p_i - q\|^2 \le \|U(p_i - q)\|^2 + \log n$$
for every $i$, and in particular for $i = \argmin_i \|U(p_i - q)\|^2$. So
$j^* \in \FF$, where
$$\FF = \{i \in [n] : \|U(p_i - q)\|^2 \le \min_j \|U(p_j - q)\|^2 + \log n\}.$$
As a result,
\begin{align}
	& P(\text{defeatist search returns the nearest neighbor} \mid \mathcal U) \\
	& \le P\left(\bigcup_{i \in \FF} \{p_i \in C(q)\} \mid \mathcal U\right) \\
	& \le |\FF|\frac{n_0}{n}
\end{align}
Taking the expectation over $\mathcal U$ gives
\begin{align}
 P(\text{defeatist search returns the nearest neighbor})
 \le \E[|\FF|] \frac{n_0}{n}.
\end{align}

We will show that $\E[|\FF|] = \OO(1)$. Let $M = \min_{i \ge 2} \|U(p_i - q)\|^2$.
By exchangeability, $P(\|U(p_1 - q)\|^2 \le M) = 1/n$. The strategy is to show,
via a custom anticoncentration bound (Lemma~\ref{lem:cdfratio}), that adding
$\log n$ to $M$ inflates this probability by at most a constant factor.
Let $F$ be the CDF of $\|U(p_1 - q)\|^2$. Then,
\begin{align}
\E[|\FF|]
& = \sum_i P(p_i \in \FF)	\\
& = n P(\|U(p_1 - q)\|^2 \le M + \log n)	\\
& = n \E[F(M + \log n)]	\label{eq:040}\\
& = n \left(\E[F(M + \log n) \indicator{F(M) < 1/n^2}] +  \E[ F(M +
\log n) \indicator{F(M) \ge 1/n^2}]\right) \label{eq:034}\\
& \le 1 +  n \E[ F(M +
\log n) \indicator{F(M) \ge 1/n^2}] \label{eq:039}
\end{align}
\eqref{eq:040} uses the fact that $p_1$ is independent of $p_2, \ldots, p_n$ and
hence of $M$, while \eqref{eq:039} follows from $F(M+\log n) \le 1$ and $P(F(M) < 1/n^2) \le \frac{n-1}{n^2} \le \frac{1}{n}$.

It remains to bound the second term in \eqref{eq:039}. let $W = \E[\|U(p_1{-}q)\|^2] -
\|U(p_1{-}q)\|^2$, a sum of $d - \log(n/n_0)$ independent centered variables in
$[-1,1]$. A quick computation shows that $\Var((p_{i,j}{-}q_j)^2) \ge
\frac{m}{180}$, hence $v = \Var(W) \ge (d{-}\log(n/n_0))\frac{m}{180} \ge
d\frac{m}{360} \ge \frac{c_1}{360}\log^3 n$ by the hypothesis $d \ge c_1 m^{-1}
\log^3 n$. We apply Lemma~\ref{lem:cdfratio} to $W$ with $\delta = 2\log n$ and $a = \E[\|U(p_1{-}q)\|^2] - t$, noting that $F(t) = P(W \ge a)$. For
$c_1$ a sufficiently large absolute constant and every $t$ with $F(t) \ge
1/n^2$, the hypotheses of Lemma~\ref{lem:cdfratio} hold, since $\delta \ge 1$,
$F(t) \ge 1/n^2 \ge e^{-\delta}$, and $\delta/v^{1/3}$ falls below any fixed
positive constant once $c_1$ is large. Hence, by monotonicity of $F$,
\begin{align}
\frac{F(t + \log n)}{F(t)} \le \frac{F(t + 2\log n)}{F(t)} \le c_3 \label{eq:Fratio}
\end{align}
for an absolute constant $c_3 > 0$.

We may apply \eqref{eq:Fratio} to $t = M$ since we are restricting to the event $F(M) \ge 1/n^2$. Hence,
\begin{align}
\E[F(M + \log n) \indicator{F(M) \ge 1/n^2}]
\le c_3 \E[F(M) \indicator{F(M) \ge 1/n^2}]
\le c_3\E[F(M)] = c_3/n
\end{align}
Combining with \eqref{eq:039} completes the proof.
\end{proof}

\begin{lemma}\label{lem:cdfratio}
There exist $c_1, c_2 > 0$ such that the following holds.
Let $Y_1, \ldots, Y_m$ be independent with $\E[Y_j] = 0$ and $|Y_j| \le 1$, and
let $W = \sum_{j=1}^m Y_j$ with $v = \Var(W)$. For every $a, \delta \in \mathbb{R}$  with $1 \le \delta \le c_1 v^{1/3}$
and $P(W \ge a) \ge e^{-\delta}$,
\begin{align}
\frac{P(W \ge a - \delta)}{P(W \ge a)} \le c_2.
\end{align}
\end{lemma}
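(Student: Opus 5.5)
Let $\sigma = \sqrt v$. The plan is to combine sharp two-sided moderate-deviation bounds for $W$ with the hypothesis $P(W \ge a) \ge e^{-\delta}$. The hypothesis forces $a$ to lie at most about $\sqrt{2\delta}\,\sigma$ above the mean. In that range the Gaussian-type tail changes by at most a constant factor under a shift of size $\delta$, because $\delta \cdot (a/\sigma^2) \lesssim \delta^{3/2}/\sigma$. This is $O(1)$ exactly when $\delta \lesssim v^{1/3}$, which explains the hypothesis $\delta \le c_1 v^{1/3}$.

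\textbf{Step 1: reductions.} If $a \le \delta + K\sigma$ for a suitable absolute constant $K$, then $a - \delta \le K\sigma$. Both probabilities are then controlled by a constant: the numerator is at most $1$, and $P(W \ge a) \ge P(W \ge 2K\sigma)$ once we know $a \le 2K\sigma$. That last probability is bounded below by an absolute constant, either via Paley--Zygmund applied to $W^2$ (using $\E W^4 \le 3v^2 + v$, since $|Y_j| \le 1$, and $v$ large), or via Berry--Esseen, since the third-moment ratio is at most $v/v^{3/2} = v^{-1/2}$. To get into this regime I would also treat the general case $\delta \le a \le 2K\sigma$ the same way. So we may assume $a \ge K\sigma + \delta$, and set $x = a/\sigma$.

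\textbf{Step 2: locating $a$.} Bernstein's inequality gives $P(W \ge a) \le \exp(-a^2/(2v + 2a/3))$. Combined with $P(W \ge a) \ge e^{-\delta}$, this yields $a \le 2\sqrt{\delta v}$ for $v$ large relative to $\delta$, which is guaranteed by $\delta \le c_1 v^{1/3}$. Hence $x \le 2\sqrt{\delta} \le 2 c_1^{1/2} v^{1/6}$, so $x = o(\sigma^{1/3})$. This is precisely the Cramér moderate-deviation window for bounded summands.

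\textbf{Step 3: moderate deviations.} In this window I would use the Cramér-type moderate-deviation theorem for sums of independent bounded variables (e.g.\ Petrov, or the Bikelis/Statulevičius form). It gives
\[
  P(W \ge a) = \bar\Phi(x)\exp\bigl(O(x^3/\sigma)\bigr)\bigl(1+O((1+x)/\sigma)\bigr),
\]
with absolute constants, since the cumulants satisfy $|\kappa_k| \le k!\,v$. Because $x^3/\sigma \le 8\delta^{3/2}/v^{1/2} \le 8c_1^{3/2}$, both $P(W \ge a)$ and $P(W \ge a - \delta)$ are within constant factors of $\bar\Phi(x)$ and $\bar\Phi(x - \delta/\sigma)$ respectively. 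The Gaussian ratio is then bounded using Mills' ratio:
\[
  \frac{\bar\Phi(x - h)}{\bar\Phi(x)} \lesssim \exp\bigl(xh + O(1)\bigr), \qquad h = \delta/\sigma \le 1,
\]
and $xh \le 2\delta^{3/2}/\sigma \le 2c_1^{3/2}$. This gives the constant $c_2$.

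\textbf{Main obstacle.} The hard part is Step 3: obtaining a moderate-deviation statement with \emph{absolute} constants, uniform over all independent bounded non-identically distributed summands. If I wanted a self-contained proof rather than citing one, I would use exponential tilting. Choose $\lambda$ with $\Lambda'(\lambda) = a$, where $\Lambda$ is the cumulant generating function of $W$; then $\lambda \approx x/\sigma$. Under the tilted measure, apply Berry--Esseen to write
\[
  P(W \ge a) = e^{\Lambda(\lambda) - \lambda a}\,\E_\lambda\bigl[e^{-\lambda(W - a)}\mathbf 1\{W \ge a\}\bigr].
\]
Do the same for $a - \delta$, reusing the same tilt $\lambda$. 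The shift then costs at most a factor $e^{\lambda\delta} \le e^{O(1)}$, and the remaining expectations are comparable by Berry--Esseen under the tilt, since the tilted variance stays $\asymp v$ for $\lambda \le 1/2$.
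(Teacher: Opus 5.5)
Your proposal is correct and follows essentially the paper's route. Bernstein's inequality confines $a \lesssim \sqrt{\delta v}$; a Cramér-type moderate-deviation theorem then compares both tails with $1-\Phi$ up to constant factors (the paper cites Chen--Fang--Shao, Proposition~4.5, where you cite a Petrov/Statulevi\v{c}ius form); and a Mills-ratio integral shows the Gaussian tail ratio is $\exp(O(\delta^{3/2}/\sqrt v))$. The only real difference is the central regime: you handle it with Berry--Esseen, while the paper uses the moderate-deviation lower bound at $z=0$ plus monotonicity, and there only Berry--Esseen works, since Paley--Zygmund on $W^2$ cannot give the one-sided bound $P(W \ge 2K\sigma) \gtrsim 1$.
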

\begin{proof}
Let $\Phi$ and $\phi$ denote the cumulative distribution function and
probability density function of the standard normal distribution.
The strategy is to compare $W$ to a Gaussian. We use a moderate-deviation result (Proposition~4.5
of~\cite{ChenFangShao2013}) to obtain a multiplicative bound $\frac{P(W \ge
z\sqrt v)}{1-\Phi(z)} \in [1/2,
 3/2]$ for $z \le a/\sqrt{v}$, reducing the problem to bounding the Gaussian tail ratio 
$\frac{1-\Phi(\frac{a -
\delta}{\sqrt{v}})}{1-\Phi(\frac{a}{\sqrt{v}})}$.
The hypotheses     
$\delta \le c_1 v^{1/3}$ and $P(W \ge a) \ge e^{-\delta}$ together keep both
$a$ and $\delta$ small, so this ratio is at most an absolute constant.

By \cite[Proposition~4.5]{ChenFangShao2013} applied to $W/\sqrt{v}$, there is an absolute constant $c_3 > 0$ such that for every $z \in [0, \sqrt v]$,
\begin{align}
\left| \frac{P(W \ge z\sqrt v)}{1 - \Phi(z)} - 1 \right| \le c_3 (1 + z^3) \gamma e^{4 z^3 \gamma}, \label{eq:md}
\end{align}
where $\gamma = v^{-3/2} \sum_j \E |Y_j|^3 e^{z|Y_j|/\sqrt v}$.

We will show that the right-hand side of \eqref{eq:md} is at most $1/2$ for $z
\in \left[0, a_{\scriptscriptstyle +}/\sqrt v\right]$. By inspection of \eqref{eq:md}, it suffices to show that $\gamma$ and $z^3 \gamma$ are both at most an absolute constant times $c_1^{3/2}$.

First, we check that $a_{\scriptscriptstyle +}/\sqrt v \le \sqrt v$ so that this range lies within
the domain of \eqref{eq:md}. Suppose $a > 0$, otherwise this is trivial. By Bernstein's inequality, $P(W \ge a) \le
\exp\left(-\frac{a^2}{2v + 2a/3}\right)$. Since our hypothesis is that $P(W \ge
a) \ge e^{-\delta}$,
\begin{align}
a^2 \le 2\delta(v + a). \label{eq:036}
\end{align}
The hypotheses $\delta \ge 1$ and $\delta \le c_1 v^{1/3}$ give $v \ge
c_1^{-3}\delta^3 \ge \delta$ for $c_1 \le 1$. Solving \eqref{eq:036} then yields $a \le \delta +
\sqrt{\delta^2 + 2\delta v}$, and since $v \ge \delta$ this means $a \le 3\sqrt{\delta v}$. Hence,
\begin{align}
\frac{a_{\scriptscriptstyle +}}{\sqrt v}
\le 3\sqrt\delta
\le 3\sqrt{c_1} v^{1/6}
\le \sqrt v,\label{eq:abound}
\end{align}
where the last inequality uses $c_1 \le 1/9$ and $v \ge \delta \ge 1$. If $a \le 0$ then
\eqref{eq:abound} holds trivially.

Now we bound $\gamma$ and $z^3 \gamma$. Since $|Y_j| \le 1$ and $\sum_j \E Y_j^2 = v$ we have $\gamma \le \tfrac{1}{\sqrt v} e^{z/\sqrt v}$.
Together with \eqref{eq:abound} and $v \ge c_1^{-3}$, this means $\gamma \le \frac{e}{\sqrt{v}} \le e c_1^{3/2}$.
In addition, \eqref{eq:abound} gives $z^3 \le 27 c_1^{3/2} \sqrt{v}$, so
$z^3 \gamma \le 27 e c_1^{3/2}$.
Then for $c_1$ sufficiently small and $z \in [0, a_{\scriptscriptstyle +}/\sqrt{v}]$,
\begin{align}
\tfrac{1}{2}(1 - \Phi(z)) \le P(W \ge z\sqrt v) \le \tfrac{3}{2}(1 - \Phi(z)). \label{eq:mdtwosided}
\end{align}
If $z < 0$ then \eqref{eq:mdtwosided} gives $ \tfrac{1}{2}(1 - \Phi(0)) \le P(W
\ge 0) \le P(W \ge z \sqrt{v})$. Furthermore, $P(W \ge z
\sqrt{v}) \le 1 = 2(1-\Phi(0))$.
Combining this with \eqref{eq:mdtwosided} and using the abbreviations $x = \frac{a_{\scriptscriptstyle +}}{\sqrt v}$ and $y = \frac{(a -
\delta)_{\scriptscriptstyle +}}{\sqrt v}$,
\begin{align}
\frac{P(W \ge a - \delta)}{P(W \ge a)} \le 4 \frac{1 - \Phi(y)}{1 - \Phi(x)}.
\end{align}
It remains to upper bound $\frac{1 - \Phi(y)}{1 - \Phi(x)}$. Birnbaum's inequality \cite{Birnbaum1942} gives $1 - \Phi(u) \ge
\frac{\sqrt{4+u^2}-u}{2}\phi(u)$. Since $\sqrt{4+u^2}-u \ge
\frac{2}{u+1}$ for $u \ge 0$, we have $1 - \Phi(u) \ge \frac{\phi(u)}{u+1}$,
or equivalently $\frac{\phi(u)}{1 - \Phi(u)} \le u+1$. Then
\begin{align}
\ln \frac{1 - \Phi(y)}{1 - \Phi(x)}
& = -\int_y^x \frac{d}{du} \ln(1 - \Phi(u)) \, du \\
& = \int_y^x \frac{\phi(u)}{1 - \Phi(u)} \, du \\
& \le \int_y^x (u+1)\,du \\
& = (x - y)\left(\frac{x + y}{2} + 1\right) \\
& \le \frac{\delta}{\sqrt{v}}(3\sqrt\delta + 1) \label{eq:035} \\
& \le \frac{4 \delta^{3/2}}{\sqrt v}  \label{eq:037} \\
& \le 4 c_1^{3/2} \label{eq:038},
\end{align}
where \eqref{eq:035} uses $x - y \le \delta/\sqrt{v}$ and the first inequality
in \eqref{eq:abound}, \eqref{eq:037} uses $\delta
\ge 1$, and \eqref{eq:038} uses $\delta \le c_1 v^{1/3}$. Therefore $P(W \ge a - \delta)/P(W \ge a)
\le 4 e^{4 c_1^{3/2}}$.
\end{proof}

Something like the density bounds in Theorems~\ref{thm:defeatistLowerBound}
and~\ref{thm:comprehensiveLowerBound} seems to be necessary. The following
example exhibits a distribution on $[0,1]^d$ with whose density is sometimes
zero and sometimes very large. 
When data is sampled from this distribution with
$d = \log(n/n_0)$, then comprehensive search only searches one cell and defeatist search returns the
nearest neighbor with high probability.

\begin{example}\label{ex:goodperformance}
Let $\epsilon = \frac{1}{10\sqrt d}$, $m = \frac{1}{d \log d}$, and
$\eta = \frac{1}{d^{10}}$. Let $\mu$ be the product distribution on
$[0,1]^d$ whose marginals are uniform on each of the intervals
$[0,\epsilon]$, $[\tfrac{1}{2}-\eta, \tfrac{1}{2}+\eta]$, and
$[1-\epsilon, 1]$, with masses $\tfrac{1-m}{2}$, $m$, and $\tfrac{1-m}{2}$
respectively. 

Let $n_0 = d^3 \log^3 d$ and $n = n_0 \cdot 2^d$, so then $d = \log(n/n_0)$. 
Draw $n$ data points
and a query point $q$ iid from $\mu$, and build a $k$-d
tree with minimum leaf size $n_0$. We claim that with probability
$1 - o(1)$ as $d \to \infty$, defeatist search returns the true nearest
neighbor and comprehensive search visits only $q$'s leaf cell.

The tree has depth $\log(n/n_0) = d$, so round-robin cycling splits each
coordinate exactly once along any root-to-leaf path. Call a coordinate
of a point \emph{left}, \emph{center}, or \emph{right} according to
which marginal interval it lies in, and call a point a \emph{corner
point} if no coordinate is center. We claim that each of the following holds
with probability $1 - o(1)$:
\begin{enumerate}[(a)]
  \item every split lies in $[\tfrac{1}{2}-\eta, \tfrac{1}{2}+\eta]$,
  \item $q$ is a corner point,
  \item every one of the $2^d$ corners contains a data point.
\end{enumerate}
For (a), an internal node has $N \ge n_0$ points and splits at the median
of the relevant coordinate. The median lies outside
$[\tfrac{1}{2}-\eta,\tfrac{1}{2}+\eta]$ only if more than $N/2$ of the
samples fall on the same side of the center band. By Hoeffding's inequality the
probability of this is at most
$2\exp(-Nm^2/2) \le 2\exp(-d\log d/2)$. A union bound over the
$< n$ internal nodes gives total failure probability at most
$2n\exp(-d\log d / 2) = o(1)$. For (b), each coordinate of $q$ is
center with probability $m$, so
$P(q \text{ is a corner point}) \ge 1 - dm = 1 - 1/\log d$ by a union
bound. For (c), any corner has probability mass $((1-m)/2)^d$, so
\begin{align}
P(\text{a corner contains no data points})
\le \left(1 - \tfrac{(1-m)^d}{2^d}\right)^n
\le \exp(-n_0 (1-m)^d)
\le \exp(-n_0/2)
\end{align}
for $d$ large, since $(1-m)^d \ge 1 - dm \to 1$. A union bound over the
$2^d$ corners gives total failure probability at most
$2^d \exp(-n_0/2) = o(1)$.

Now suppose (a), (b), and (c) occur. Let $p^*$ be a data point in $q$'s corner.
Then each coordinate of $p^* - q$ has absolute value at most $\epsilon$, so
$\|p^* - q\| \le \epsilon \sqrt d = 1/10$. Each coordinate of $q$ is at
distance at least $\tfrac{1}{2} - \eta - \epsilon \ge 1/10$ from the
corresponding split (for $d \ge 2$), so $B_{1/10}(q) \subseteq C(q)$.
Hence the nearest neighbor lies in $C(q)$, defeatist search returns
it, and no other leaf cell intersects $B_{1/10}(q)$, so comprehensive
search visits only $C(q)$.
\end{example}

\section{Upper Bounds}

\subsection{Comprehensive Search Upper Bound}
\begin{theorem} \label{thm:comprehensiveUpperBound}
There exists $c > 0$ such that the following holds.
Let $q \in [0,1]^d$.
Draw $n$ data points iid from the uniform distribution on
$[0,1]^d$, and build a $k$-d tree with minimum leaf size $n_0 \ge 11$.
The number of cells visited by comprehensive search on $q$ is at most
\begin{align}
  4 \cdot (8\sqrt{2\pi e})^d
  \qquad\text{with probability at least}\qquad
  1 - 3n \exp(-c n_0).
\end{align}
\end{theorem}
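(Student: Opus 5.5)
The plan is to bound the number of visited cells by counting the leaf cells that meet a ball around $q$, and then to control that count with a volume argument. Comprehensive search visits a leaf cell only if the cell meets $B_{\rho}(q)$, where $\rho$ is the distance from $q$ to the best point found so far. Since the best point only improves, $\rho$ is at most the distance $R$ from $q$ to the closest point in $q$'s own leaf $C(q)$, which is searched first. So the number of visited leaves is at most the number of leaf cells meeting $B_R(q)$. I will show three things, each with probability $1-n\exp(-cn_0)$:
\begin{enumerate}[(i)]
\item every leaf cell is geometrically well behaved, with every side length within a constant factor of $(n_0/n)^{1/d}$;
\item every leaf cell has volume comparable to $n_0/n$;
\item $R$ is at most about $\sqrt d\,(n_0/n)^{1/d}$ up to constants.
\end{enumerate}
Combining these gives the stated bound $4(8\sqrt{2\pi e})^d$, and a union bound over the three events gives the probability $1-3n\exp(-cn_0)$.

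For (i) and (ii), I would argue node by node. At a node of depth $l$ holding $n2^{-l}\ge 2n_0$ points, conditional on the cell, the points are uniform in the cell. The median split along coordinate $\alpha$ therefore lands at a fraction in $[1/2-\epsilon,1/2+\epsilon]$ of the cell's width, except with probability $\exp(-c\,n2^{-l})\le\exp(-cn_0)$ by Hoeffding/Chernoff on the order statistic. A union bound over fewer than $n$ internal nodes makes every split nearly balanced. Round-robin splitting means each coordinate is split about $\log(n/n_0)/d$ times along a path, so each side length is $\prod(1/2\pm\epsilon)$. This product is within a factor $(1\pm 2\epsilon)^{\log(n/n_0)/d+1}$ of its ideal value, where the ``$+1$'' accounts for coordinates split one time more or fewer. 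The accumulation over many levels is a problem: to keep the aspect ratio bounded I must let $\epsilon$ shrink with the number of points at the node. I would take deviations of order $(n2^{-l})^{-1/2}$ scaled so that they sum geometrically from leaf to root. The sum is dominated by the leaf level, where there are $\ge n_0$ points, so a constant deviation gives factor at most $2$ per side. That explains the $8$ and keeps the failure probability $\exp(-cn_0)$ at each node. Volumes then lie in $[\tfrac{n_0}{n}4^{-d},\,4^d\tfrac{n_0}{n}]$ roughly, and this is the same estimate one needs to conclude.

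For (iii), $C(q)$ contains $n_0$ uniform points and has all sides at least $s=\tfrac14 (n_0/n)^{1/d}$. So $R\le \operatorname{diam} C(q)\le \sqrt d\cdot 4(n_0/n)^{1/d}$, with no extra probability needed beyond (i). To finish, I count leaves meeting $B_R(q)$. Each such leaf has diameter at most $R'=\sqrt d\,4(n_0/n)^{1/d}$, so it lies inside $B_{R+R'}(q)$. The leaves are disjoint with volume at least $\tfrac{n_0}{n}\cdot c^d$, so their number is at most $\Vol(B_{2R'}(0))\big/\min\Vol(C)$. Using $\Vol(B_r)\le (2\pi e r^2/d)^{d/2}$ as in Theorem~\ref{thm:comprehensiveLowerBound}, this is at most $(8\sqrt{2\pi e})^d$ times a constant. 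Boundary effects at $[0,1]^d$ only help. The constant $4$ and the threshold $n_0\ge 11$ would come from tuning the concentration step.

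The main obstacle is step (i): getting the per-coordinate side-length ratio bounded by an absolute constant uniformly over all depths, with only $\exp(-cn_0)$ failure per node. I would handle it with the depth-dependent deviation schedule, using relative error $t_l\propto (n_0 2^{L-l})^{-1/2}$ at depth $l$ with $L=\log(n/n_0)$. Then $\sum_l t_l$ converges to a constant while the tail bound at each node remains $\exp(-c\,n2^{-l}t_l^2)=\exp(-c'n_0)$.
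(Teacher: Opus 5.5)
Your route is the paper's: nearly balanced splits under a deviation schedule that sums geometrically, hence regular leaves; then $R\le\operatorname{diam}C(q)$; then a packing count in $B_{2R'}(q)$. However, the final count as written does not give the stated bound. You lower-bound leaf volumes by $\frac{n_0}{n}c^d$, explicitly $\frac{n_0}{n}4^{-d}$ obtained by multiplying per-side bounds, and call this ``the same estimate one needs to conclude.'' It is not. With $R'=4\sqrt d\,(n_0/n)^{1/d}$ you get $\Vol(B_{2R'})\le(8\sqrt{2\pi e})^d\frac{n_0}{n}$, and dividing by $\frac{n_0}{n}c^d$ gives $(8\sqrt{2\pi e}/c)^d$. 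For $c=\tfrac14$ that is $(32\sqrt{2\pi e})^d$, an extra exponential factor rather than a constant.

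To reach $4(8\sqrt{2\pi e})^d$ you need $\min_C\Vol(C)=\Omega(n_0/n)$ with no $d$-dependent loss, and per-side bounds can never give that. It has to come from the fact that a leaf's volume is the product of the $L=\log(n/n_0)$ split fractions on its root-to-leaf path. This gives $\Vol(C)\ge\prod_l(\tfrac12-t_l)\ge\frac{n_0}{n}e^{-4\sum_l t_l}$. That is exactly what Lemma~\ref{lem:cellRegularity} does: volume in $[\frac{n_0}{2n},\frac{2n_0}{n}]$ and aspect ratio at most $4$. Lemma~\ref{lem:diameterBoundUsingAspectRatio} then gives diameter at most $\Delta=4\sqrt d\,(2n_0/n)^{1/d}$, and $\frac{2n}{n_0}\Vol(B_{2\Delta})\le 4(8\sqrt{2\pi e})^d$. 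Your item (ii) names the right estimate, but you never derive it in that form, and you then use the weaker one.

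Two smaller points. First, your bound on $R$ needs an \emph{upper} bound on side lengths, not the lower bound $s=\frac14(n_0/n)^{1/d}$ you cite. That upper bound does follow from your schedule, since a coordinate split $k_i\ge\lfloor L/d\rfloor$ times has side at most $2^{1-L/d}e^{2\sum_l t_l}$. Second, the schedule must be $t_l=\kappa\sqrt{n_0 2^l/n}=\kappa\,2^{-(L-l)/2}$. Your formula $(n_0 2^{L-l})^{-1/2}=(n2^{-l})^{-1/2}$ only works if the hidden constant carries a factor $\sqrt{n_0}$. Otherwise the per-node tail $\exp(-c\,n2^{-l}t_l^2)$ is $\exp(-\Theta(1))$, and the union bound over $n$ nodes fails. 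Finally, (i)--(iii) all hold on the single balanced-splits event, so no three-way union bound is needed.
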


\begin{proof}
	By Lemma~\ref{lem:cellRegularity}, with probability at least
	$1 - 3n\exp(-cn_0)$, every leaf cell has aspect ratio at most $4$ and
	volume within $\left[\frac{n_0}{2n}, \frac{2n_0}{n}\right]$.
	Suppose this event occurs.
	Then the diameter of every leaf cell is at most $\Delta = 4\sqrt{d}
	\left(\frac{2n_0}{n}\right)^{1/d}$ by Lemma
	\ref{lem:diameterBoundUsingAspectRatio}. The nearest neighbor
	to $q$ in its leaf cell is within distance $\Delta$, so every cell visited
	by comprehensive search must intersect $B_\Delta(q)$. Since all leaf cells
	have diameter at most $\Delta$, every visited cell is contained in
	$B_{2\Delta}(q)$. Since each cell has volume at
	least $\frac{n_0}{2n}$ and $\Gamma\left(\frac{d}{2} + 1\right) \ge
	\sqrt{\frac{d}{2e}}^d$, the number of visited cells is at most
	\begin{align}
		\frac{2n}{n_0} \Vol(B_{2\Delta}(q)) = \frac{2n}{n_0}
		\frac{\pi^{d/2}}{\Gamma\left(\frac{d}{2} + 1\right)} (2\Delta)^d =
		4 \frac{(8 \sqrt{\pi d})^{d}}{\Gamma\left(\frac{d}{2} + 1\right)}
		\le 4 \cdot (8\sqrt{2\pi e})^d.
	\end{align}
\end{proof}

The following Lemma says that with high probability every cell in the $k$-d tree
has small aspect ratio and volume close to $n_0/n$.
The idea is that each split is at the median of the data along some coordinate, and for uniform
data, the median concentrates around the midpoint. Specifically, the split at
level $l$ deviates from the midpoint by at most
$\epsilon_l = O(\sqrt{n_0 \cdot 2^l / n})$ with high probability. 
The volume and aspect ratio of the cell are controlled by $\prod_l (1 \pm
\epsilon_l)$, which remains bounded because
$\sum_l \epsilon_l$ converges. A union bound over all $\le 2 n$ internal nodes
then gives the failure probability.

\begin{lemma} \label{lem:cellRegularity}
Draw $n$ data points iid from the uniform distribution on $[0,1]^d$, and build
a $k$-d tree with minimum leaf size $n_0 \ge 11$.
The probability that every leaf cell has aspect ratio at most $4$
and probability mass within
$\left[\frac{n_0}{2n}, \frac{2n_0}{n}\right]$ is at least
\begin{align}
1 - 3 n \exp(- c n_0)
\end{align}
where $c = \left(\frac{(\sqrt{2}-1)\ln 2}{8}\right)^2$.
\end{lemma}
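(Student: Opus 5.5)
We prove Lemma~\ref{lem:cellRegularity} by controlling each median split and then multiplying the per-level errors along a root-to-leaf path.

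\textbf{Setup.} Fix an internal node at depth $l$. It contains $N = n2^{-l} \ge 2n_0$ points. Conditional on the cell $C$ of this node, these points are iid uniform on $C$. This follows from the standard fact that, given the sequence of median splits, the points on one side of a median are iid from the conditional distribution. One can justify it by conditioning on the order statistic that determines the split; the split point itself is then excluded or handled as a boundary point.

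\textbf{Step 1: one split.} Let the node split coordinate $\alpha$, whose side length is $L$. Rescale this coordinate to $[0,1]$. The split is at the $\lceil N/2\rceil$-th order statistic of $N$ iid uniforms. The relative position $s$ of the split satisfies $|s - 1/2| > \epsilon$ only if a Binomial$(N, 1/2\pm\epsilon)$ deviates from its mean by about $\epsilon N$. Hoeffding's inequality bounds this probability by $2\exp(-2\epsilon^2 N)$, up to a rounding correction that $n_0 \ge 11$ absorbs. Choose
\[
\epsilon_l = \beta\sqrt{n_0/N} = \beta\sqrt{n_0 2^l/n},
\]
so the failure probability is at most $2\exp(-2\beta^2 n_0)$, uniformly in $l$. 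Summing over the at most $n$ internal nodes gives total failure probability at most $2n\exp(-2\beta^2 n_0)$. Matching $c = 2\beta^2$ with the stated constant fixes $\beta = \frac{(\sqrt2-1)\ln 2}{8\sqrt2}$; the remaining factor of $n$ in $3n$ is slack for the rounding.

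\textbf{Step 2: the product along a path.} On the good event, every child side is $L(1/2+\theta)$ with $|\theta|\le\epsilon_l$. A leaf has depth $D=\log(n/n_0)$, so along its path the volume equals
\[
2^{-D}\prod_l (1+2\theta_l) = \frac{n_0}{n}\prod_l(1+2\theta_l).
\]
The key estimate is the geometric sum. Since $2^{l}\le 2^{D}=n/n_0$,
\[
\sum_{l<D} \epsilon_l = \beta\sum_{l<D}\sqrt{2^{l-D}} \le \frac{\beta}{1-2^{-1/2}} = \frac{\beta\sqrt2}{\sqrt2-1}.
\]
With the chosen $\beta$ this gives $2\sum_l\epsilon_l \le \frac{\ln 2}{4}$, up to constant bookkeeping. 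We also need each $2\epsilon_l$ to be small. At $l=D-1$ we have $\epsilon_l = \beta/\sqrt2$, which is indeed small. We then use $\ln(1\pm x)\ge -x - x^2$, or simply $|\ln(1+2\theta)|\le 4|\theta|$ for $|\theta|\le 1/4$. This gives $|\ln\prod_l(1+2\theta_l)| \le 4\sum_l\epsilon_l \le \ln 2$, i.e. a factor in $[1/2,2]$. That is exactly the volume claim.

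\textbf{Step 3: aspect ratio.} For each coordinate, the side length equals $2^{-k}\prod(1+2\theta)$ over the levels splitting that coordinate, where $k$ is the number of such splits. Round-robin makes the $k$ values differ by at most $1$ across coordinates. Each product again lies in $[1/2, 2]$, since it uses only a subset of the levels. But combining $2\times2\times2$ would give ratio $8$, so the bound needs sharpening. Bound each coordinate's product separately in $[e^{-x_i}, e^{x_i}]$, where $x_i$ is that coordinate's share of $4\sum\epsilon_l$ and the shares sum to at most $\ln 2$. Then the ratio between two coordinates is at most $2\cdot e^{x_i+x_j}\le 2\cdot 2=4$.

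\textbf{Main obstacle.} The main obstacle is the bookkeeping of constants in Steps 2 and 3: making the stated $c$ and the factor $3$ come out exactly, and rigorously justifying the conditional-uniformity and order-statistic tail bound with the rounding of the median.
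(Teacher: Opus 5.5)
Your proposal follows the paper's proof: per-node median concentration via Hoeffding with $\epsilon_l \propto \sqrt{n_0 2^l/n}$, a union bound over the fewer than $n$ internal nodes, and a geometric-series bound on the products of $(\tfrac12 \pm \epsilon_l)$ along each root-to-leaf path for both volume and aspect ratio. Your per-coordinate split of $4\sum_l \epsilon_l$ is a slightly sharper version of the paper's bound $2\prod_l \frac{1/2+\epsilon_l}{1/2-\epsilon_l}$.

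Two bookkeeping points are worth pinning down. First, the paper makes your ``rounding correction'' precise by observing that at most one point of the cell is non-uniform along the split coordinate: the previous split point along that coordinate, lying on the upper face. So Lemma~\ref{lem:uniformMedian} applies with $k=1$, and the correction costs only a constant factor absorbed into the $3$. Second, the paper takes $\epsilon_l=\sqrt{c}\,\sqrt{n_0 2^l/n}$ and loses a factor $2$ via $m \ge 2^{-l-1}n$; this is what makes $n_0\ge 11$ enough for $\epsilon_l > \frac{1}{2m}$. With your $\sqrt{c/2}$ scaling that condition needs $n_0 \ge 15$, which is harmless, since $3n e^{-cn_0} > 1$ for smaller $n_0$.
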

\begin{proof}
Let $u$ be a node in the tree with level $l$ and split
coordinate $\alpha$. Let $[a,b]$ be the cross section of $u$'s
cell
\footnote{Whether $a,b$ are included in the interval does not matter for our argument.}
along the $\alpha$-th coordinate. At least $2^{-l} n - 1$ of points in $u$ are not
previous split points along coordinate $\alpha$. Hence, these points are
uniformly distributed on $[a,b]$.
Now let $\epsilon_l = c_1 \sqrt{\tfrac{n_0}{n}2^l}$ with $c_1 =
\frac{(\sqrt{2}-1)\ln 2}{8}$ and let $A_u$ be the event that $u$'s split position is between
$a + (b{-}a)\left(\tfrac{1}{2}{\pm}\epsilon_l\right)$.
Let $m = 2^{-l}n - 1$.
Since $n_0 \ge 11$, we have $\epsilon_l > \frac{1}{2m}$, so
Lemma \ref{lem:uniformMedian} gives
\begin{align}
	P(A_u)
	& \ge 1 - 2 \exp\left(- 2 m \left(\epsilon_l - \frac{1}{2m}\right)^{2}\right)
	\ge 1 - 3 \exp\left( - c_1^2 n_0\right)  \label{eq:010}
\end{align}
where the last step uses $m \ge 2^{-l-1} n$, $\epsilon_l \le c_1/\sqrt{2}$, and $2e^{\sqrt{2} c_1} \le 3$.
Now let $A$ be the event that $A_u$ occurs for every internal node of the tree.
Using \eqref{eq:010} and a union bound,
\begin{align}
P(A)
& \ge 1 - 3n \exp( - c_1^2 n_0). \label{eq:014}
\end{align}
When $A$ occurs, the aspect ratio of every cell is at most $2
\prod_{l=0}^{\log(n/n_0)-1} \frac{1/2 + \epsilon_l}{1/2 - \epsilon_l}$.
Furthermore,
\begin{align}
\log\left(2 \prod_{l=0}^{\log(n/n_0)-1} \frac{1/2 + \epsilon_l}{1/2 - \epsilon_l}\right)
& = 1 + \sum_{l=0}^{\log(n/n_0)-1} \log\left(\frac{1/2 + \epsilon_l}{1/2 - \epsilon_l}\right)\\
& \le 1 + \sum_{l=0}^{\log(n/n_0)-1} \log\left(1 + 8 \epsilon_l \right) \label{eq:011}\\
& \le 1 + \frac{8}{\ln 2} \sum_{l=0}^{\log(n/n_0)-1} \epsilon_l  \label{eq:012}\\
& = 1 + \frac{8c_1}{(\sqrt{2}-1)\ln 2}\left(1-\sqrt{\tfrac{n_0}{n}}\right)  \label{eq:013} \\
& \le 1 + \frac{8c_1}{(\sqrt{2}-1)\ln 2} = 2,
\end{align}
where \eqref{eq:011} follows from $\epsilon_l \le c_1 \le 1/4$ and $\frac{1/2 + x}{1/2 -
x} \le 1+8x$ for $0 \le x \le 1/4$; \eqref{eq:012} follows from $\log(1+x) \le
\frac{x}{\ln 2}$ for $x \ge 0$; \eqref{eq:013} follows from reindexing by
$\log(n/n_0) - l$ and summing the resulting geometric series.
Thus, every cell has aspect ratio at most $2^2 = 4$ when $A$ occurs.

Furthermore, the probability mass of every leaf cell is at least
$\prod_{l=0}^{\log(n/n_0)-1} (1/2 - \epsilon_l)$.
Since $\sum_{l=0}^{\log(n/n_0)-1} \epsilon_l \le \frac{c_1}{\sqrt{2}-1}$
by the geometric series computation above,
\begin{align}
\ln\left( \prod_{l=0}^{\log(n/n_0)-1} (1/2 - \epsilon_l) \right)
& = -\ln(n/n_0) + \sum_{l=0}^{\log(n/n_0)-1} \ln\left(1 - 2\epsilon_l \right) \\
& \ge -\ln(n/n_0) - \sum_{l=0}^{\log(n/n_0)-1} 4 \epsilon_l  \label{eq:019}\\
& \ge -\ln(n/n_0) - \frac{4c_1}{\sqrt{2}-1} = -\ln(n/n_0) - \frac{\ln 2}{2},
\end{align}
where \eqref{eq:019} follows from $\ln(1-x) \ge -2x$ for $0 \le x \le 1/2$.
Thus, every cell has probability mass at least $\frac{n_0}{n} e^{-\ln 2/2} =
\frac{n_0}{n\sqrt{2}} \ge \frac{n_0}{2n}$ when $A$ occurs.
A similar computation with $\prod_{l=0}^{\log(n/n_0)-1} (1/2 + \epsilon_l)$
shows that every cell has probability mass at most $\frac{\sqrt{2}n_0}{n} \le \frac{2n_0}{n}$ when $A$
occurs, which together with \eqref{eq:014} completes the proof.
\end{proof}
\subsection{Defeatist Search Upper Bound}
\begin{theorem} \label{thm:defeatistUpperBound}
There exists $c > 0$ such that the following holds.
Draw $n$ data points and a query $q$ iid from the uniform distribution on
$[0,1]^d$, and build a $k$-d tree with minimum leaf size $n_0 \ge 11$.
Defeatist search on $q$ returns the nearest neighbor with probability at least
\begin{align}
  1 - 3 n \exp(-c n_0) - 24 d^{3/2}\left( \frac{\log n_0}{n_0} \right)^{1/d}
  \qquad\text{whenever}\qquad n_0 \ge 2\log n. \label{eq:017}
\end{align}
\end{theorem}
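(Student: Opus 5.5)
Condition throughout on the event $A$ of Lemma~\ref{lem:cellRegularity}: every leaf cell has aspect ratio at most $4$ and mass in $[\frac{n_0}{2n},\frac{2n_0}{n}]$. This costs the term $3n\exp(-cn_0)$. The remaining term should come from a boundary-layer argument. Defeatist search fails only if the nearest neighbor of $q$ lies outside $C(q)$, so failure requires that the distance $\rho$ from $q$ to its nearest data point exceeds the distance from $q$ to $\partial C(q)$.

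I will control the two sides of this comparison separately.

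\textbf{Bounding $\rho$.} The leaf $C(q)$ contains $n_0$ points, but these are not uniform within it, since the cell is data-dependent. I would instead use a direct union bound over all data points: $P(\rho > r)=(1-\Vol(B_r(q)\cap[0,1]^d))^n$. A cleaner route is to use the cell itself. Under $A$, the cell has side lengths $s_i$ with $\max s_i \le 4\min s_i$ and $\prod s_i \ge n_0/(2n)$, so every side is at least $s_{\min}\ge \frac14 (n_0/2n)^{1/d}$. By Lemma~\ref{lem:diameterBoundUsingAspectRatio}, the diameter is at most $\Delta=4\sqrt d(2n_0/n)^{1/d}$.

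Defeatist search is correct as soon as the $n_0$ points of $C(q)$ include a point closer to $q$ than $\mathrm{dist}(q,\partial C(q))$. I would choose a radius $r$ so that the box $q+[-r/\sqrt d,r/\sqrt d]^d$ (or a ball of radius $r$) of mass roughly $\frac{\log n_0}{n_0}\cdot\frac{n_0}{n}$ contains a data point with high probability. Rather than over $q$, I would union bound over a fixed grid of $\lesssim n$ boxes of side $\approx(\log n_0/n)^{1/d}$, so that every grid box contains a point. Each box is empty with probability $\le e^{-\log n_0 \cdot \text{const}}$; this is where $n_0\ge 2\log n$ should enter, making the union over $n/\log n_0$ boxes fine. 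Then $\rho\le r:=\sqrt d\,(c'\log n_0/n)^{1/d}$ deterministically on this event.

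\textbf{Bounding the boundary layer.} With $\rho\le r$ in hand, failure requires $q$ to lie within distance $r$ of some face of $C(q)$. Conditional on the tree, $q$ is uniform, so this probability is at most $\sum_{\text{cells}}\Vol(\text{$r$-shell of cell})$. For a rectangle with sides $s_i$, the shell volume relative to the cell volume is at most $\sum_i 2r/s_i\le 2dr/s_{\min}$. On $A$, this gives
\[
\frac{2dr}{s_{\min}} \lesssim d\cdot\sqrt d\,\Big(\frac{\log n_0}{n}\Big)^{1/d}\Big(\frac{n}{n_0}\Big)^{1/d},
\]
which yields the claimed $24 d^{3/2}(\log n_0/n_0)^{1/d}$ after tracking constants: a factor $4$ from the aspect ratio and $2^{1/d}$ factors bounded by $2$. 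Summing the relative shell volumes against cell masses is just an average, so the bound holds as a probability over $q$.

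\textbf{Expected obstacle.} The main obstacle is the event that $\rho$ is small, specifically choosing the grid or radius so that the failure probability is absorbed. If the grid-emptiness failure is not already dominated by $3n\exp(-cn_0)$ using $n_0\ge2\log n$, I would fold it in by shrinking $c$. A secondary concern is that the grid points sit near the boundary of $[0,1]^d$, which is handled by keeping the grid inside the cube.
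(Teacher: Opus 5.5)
\textbf{There is a genuine gap: the grid argument for bounding $\rho$ does not work.} Your boundary-layer estimate is fine; the problem is the step controlling $\rho$.

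\emph{Why the grid fails.} A grid box of mass $v = c'\log n_0/n$ is empty with probability $(1-v)^n \approx n_0^{-\Theta(c')}$. There are $1/v = n/(c'\log n_0)$ boxes, so your union bound is of order $\frac{n}{\log n_0}\, n_0^{-\Theta(c')}$. This is far larger than $1$ unless $n_0$ grows polynomially in $n$. The hypothesis $n_0 \ge 2\log n$ does not help here. Nor can the term be folded into $3n\exp(-cn_0)$ by shrinking $c$, because it decays only polynomially in $n_0$.

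\emph{Why no grid can rescue it.} Forcing every box to be nonempty requires box mass of order $\log n/n$. Your argument then yields $d^{3/2}(\log n/n_0)^{1/d}$ rather than $d^{3/2}(\log n_0/n_0)^{1/d}$. Any guarantee that holds uniformly over $[0,1]^d$ pays this $\log n$. The $\log n_0$ in the theorem comes from controlling $\rho$ only at the single random query, where a failure probability of order $1/n_0 \le (\log n_0/n_0)^{1/d}$ is affordable.

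\emph{The cell-based route you dropped works.} You abandoned it for a wrong reason. Conditional on the tree $T$, the points of $C(q)$ other than the at most $\log(n/n_0)$ split points of its ancestors are iid uniform on $C(q)$. Moreover, $q$ conditioned on landing in a given leaf is uniform on it and independent of them. The hypothesis $n_0 \ge 2\log n$ is what guarantees at least $n_0/2$ such points. This is the paper's proof: on the event that $C(q)$ has aspect ratio at most $4$ (Lemma~\ref{lem:cellRegularity}), Lemma~\ref{lem:defeatistUpperBoundEventG} with $m = n_0/2$ and $\alpha = 4$ gives the second term directly.

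\emph{Alternatively, the formula you discarded also repairs your argument.} This is the per-query formula $P(\rho>r)=(1-\Vol(B_r(q)\cap[0,1]^d))^n$ that you wrote down.
\begin{itemize}
\item For $r \le 1/2$, $\Vol(B_r(q)\cap[0,1]^d) \ge 2^{-d} v_d r^d$: take the orthant at $q$ pointing toward the center of the cube.
\item Choosing $r$ with $n 2^{-d} v_d r^d = \ln n_0$ gives $P(\rho > r) \le 1/n_0$.
\item The inscribed-cube bound $v_d \ge (2/\sqrt d)^d$ gives $r \le \sqrt d(\ln n_0/n)^{1/d}$.
\item Your shell bound $2dr/s_{\min}$, with $s_{\min} \ge \tfrac14 (n_0/(2n))^{1/d}$, then proves the theorem with constant $17$ in place of $24$.
\item If $r > 1/2$, that bound exceeds $1$ and the statement is trivial.
\end{itemize}
This route does not even need $n_0 \ge 2\log n$.
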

\begin{proof}
Let $C$ be the leaf cell $q$ lands in.
Let $F$ be the event that $C$ has aspect ratio at most $4$.
Let $G$ be the event that $B_{\|q - \hat p\|}(q) \subset C$, where $\hat p$ is the
nearest neighbor to $q$ in $C$.
Whenever $G$ occurs, defeatist search returns the nearest neighbor to $q$.
Hence, it suffices to show that $P(G\cap F)$ is at least the quantity in \eqref{eq:017}.
Condition on the tree $T$. Then $F$ is determined by $T$, and on $\{q \in C\}$ the
query $q$ together with the $n_0 - \log(n/n_0) \ge n_0/2$ non-split-point data points
in $C$ are iid uniform on $C$. Then Lemma \ref{lem:defeatistUpperBoundEventG} yields
\begin{align}
	P(G \mid F) 
	\ge 1 - 12 d^{3/2}\left( \frac{\log(n_0/2)}{n_0/2} \right)^{1/d}
	\ge 1 - 24 d^{3/2}\left( \frac{\log n_0}{n_0} \right)^{1/d}. \label{eq:016}
\end{align}
Furthermore, Lemma \ref{lem:cellRegularity} gives
\begin{align}
	P(F) \ge 1 - 3 n \exp(- c n_0). \label{eq:015}
\end{align}
where $c = \left(\frac{(\sqrt{2}-1)\ln 2}{8}\right)^2$.
Combining \eqref{eq:016} and \eqref{eq:015} completes the proof.
\end{proof}

\begin{lemma} \label{lem:defeatistUpperBoundEventG}
Let $q$ and $m \ge 3$ data points be drawn iid from the uniform distribution on a
$d$-dimensional rectangle $R$ with aspect ratio $\alpha$. Then
\begin{align}
P(B_{\min_i \|q-p_i\|}(q) \subset R)
\ge 1 -  3\alpha d^{3/2}\left( \frac{\ln m}{m} \right)^{1/d}
\end{align}
\end{lemma}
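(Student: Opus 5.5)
By scaling each coordinate, I would first reduce to the case $R = \prod_{j=1}^d [0, L_j]$ with $\min_j L_j = \ell$ and $\max_j L_j = \alpha\ell$. Write $r = \min_i \|q - p_i\|$. The ball $B_r(q)$ lies in $R$ exactly when $r \le \operatorname{dist}(q, \partial R)$. Fix a threshold $\rho$, to be chosen later. The failure event is then contained in the union of two events:
\begin{enumerate}[(i)]
\item $r > \rho$, meaning no data point lies within distance $\rho$ of $q$;
\item $\operatorname{dist}(q,\partial R) < \rho$, meaning $q$ lies within $\rho$ of some face of $R$.
\end{enumerate}
I would bound these two probabilities separately and add them.

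For (ii), a union bound over coordinates gives
\begin{align}
P(\operatorname{dist}(q,\partial R) < \rho) \le \sum_{j} \frac{2\rho}{L_j} \le \frac{2d\rho}{\ell}.
\end{align}

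For (i), condition on $q$. The hard part is that $B_\rho(q)\cap R$ can be much smaller than $B_\rho(q)$ when $q$ is near the boundary. To handle this, I would use the fact that for $\rho \le \ell$ the intersection $B_\rho(q) \cap R$ contains an axis-aligned cube of side $\rho/\sqrt d$ that contains $q$. Such a cube exists because in each coordinate one can move inward from $q$ by $\rho/\sqrt d \le L_j$, and its diameter is $\rho$. This gives
\begin{align}
P(\|p_i - q\| < \rho \mid q) \ge \frac{(\rho/\sqrt d)^d}{\Vol(R)} \ge \frac{\rho^d}{d^{d/2}\,\alpha^{d}\ell^{d}},
\end{align}
using $\Vol(R) \le (\alpha\ell)^d$. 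Hence
\begin{align}
P(r > \rho) \le \exp\!\left(-m\frac{\rho^d}{(\sqrt d\,\alpha\ell)^d}\right).
\end{align}

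Now choose $\rho = \sqrt d\,\alpha\ell\,(\ln m/m)^{1/d}$. This makes the bound for (i) equal to $1/m$. Substituting into (ii) gives $2\alpha d^{3/2}(\ln m/m)^{1/d}$.

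It remains to absorb the $1/m$ term into one further copy of $\alpha d^{3/2}(\ln m/m)^{1/d}$. This works because $(\ln m/m)^{1/d} \ge \ln m/m \ge 1/m$ whenever $m \ge 3$, since $\ln m/m < 1$ and $\ln m \ge 1$.

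The case $\rho > \ell$ requires no separate argument. In that case $\alpha d^{3/2}(\ln m/m)^{1/d} \ge \rho/\ell > 1$, so the claimed bound is trivial. Summing the pieces gives the constant $3$.

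The step I expect to need the most care is the boundary-effect lower bound in (i), namely placing a cube of side $\rho/\sqrt d$ inside $B_\rho(q)\cap R$ uniformly in $q$. This is where the factor $\sqrt d\,\alpha$ inside $\rho$ comes from. Everything else is a union bound.
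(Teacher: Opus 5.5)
Your proof is correct and is essentially the paper's argument: the same two events ("$q$ lies within $t$ of $\partial R$" and "no data point lies within $t$ of $q$"), the same threshold $t = \alpha\sqrt{d}\,L_{\min}(\ln m/m)^{1/d}$ giving the terms $2\alpha d^{3/2}(\ln m/m)^{1/d}$ and $1/m$, and the same absorption of $1/m$ using $m \ge 3$. The differences are cosmetic. You combine the two events with a union bound, whereas the paper writes $P(\cdot \mid q \in R_{-t})\,P(q \in R_{-t})$, with $R_{-t}$ the points of $R$ at distance at least $t$ from $\partial R$; and the paper skips your boundary-effect cube step, since conditioning on $q \in R_{-t}$ puts all of $B_t(q)$ inside $R$, so the unit-ball volume bound $v_d \ge d^{-d/2}$ suffices.
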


\begin{proof}
 Let $L_1, \dots, L_d$ be the
side lengths of $R$ and let $R_{-t} = \{x \in R: d(x, \partial R) \ge t\}$. 
For $t \in [0, \min_i L_i/2]$ we have $\Vol(R_{-t}) = \prod_{j=1}^d (L_j - 2t)$.
Note that
\begin{align}
P(B_{\min_i \|q-p_i\|}(q) \subset R)
    & \ge P( \min_i \|q - p_i\| \le t \mid q \in R_{-t})P(q \in R_{-t})
	\label{eq:005}
\end{align}
Let $v_d = \Vol(B_1(0))$ be the volume of the unit ball in $\R^d$.
Since $B_t(q) \subset R$ whenever $q \in R_{-t}$, and samples are uniform on $R$,
\begin{align}
P(\min_i \|q - p_i\| \le t \mid q \in R_{-t}) 
\ge 1 - (1 - \Vol(R)^{-1} v_d t^d)^m
\ge 1 - \exp(- \Vol(R)^{-1} m v_d t^d).
\end{align}
and 
\begin{align}
P(q \in R_{-t}) 
= \prod_{j=1}^d (1 - 2t/L_j) 
\ge 1 - \sum_{j=1}^d 2t/L_j
\ge 1 - 2dt/L_{\min}
\end{align}
where $L_{\min} = \min_i L_i$.
Choosing $t = \alpha d^{1/2} L_{\min} \left( \frac{\ln m}{m} \right)^{1/d}$ gives
\begin{align}
    P(q \in R_{-t}) & \ge 1 - 2\alpha d^{3/2}\left( \frac{\ln m}{m} \right)^{1/d}
	\label{eq:004}
\end{align}
and
\begin{align}
    P( \min_i \|q - p_i\| \le t \mid q \in R_{-t})
    & \ge 1 - \exp\left(- \Vol(R)^{-1} m v_d t^d \right) \\
    & \ge 1 - \exp\left(- (\alpha L_{\min} d^{1/2})^{-d} m  t^d \right)
	\label{eq:006} \\
    & = 1 - \exp(-\ln m)  \label{eq:007}\\
	& = 1 - \frac{1}{m}  \label{eq:008}
\end{align}
where \eqref{eq:006} follows from $\Vol(R) \le \alpha^d L_{\min}^d$ and $v_d \ge
d^{-d/2}$, and \eqref{eq:007} follows from the definition of $t$.
Finally, combining \eqref{eq:005}, \eqref{eq:004} and \eqref{eq:008} yields
\begin{align}
P(B_{\min_i \|q-p_i\|}(q) \subset R) 
& \ge \left(1 - \frac{1}{m}\right)\left(1 - 2\alpha d^{3/2}\left( \frac{\ln m}{m}
\right)^{1/d}\right) \\
& \ge 1 - \frac{1}{m} - 2\alpha d^{3/2}\left( \frac{\ln m}{m} \right)^{1/d} \\
& \ge 1 -  3\alpha d^{3/2}\left( \frac{\ln m}{m} \right)^{1/d},
\end{align}
where the last line follows from $\frac{1}{m} \le \left(\tfrac{\ln m}{m}\right)^{1/d}$
(this holds for $m \ge 3$ since $\ln m \ge 1 \ge m^{-(d-1)}$)
and $\alpha d^{3/2} \ge 1$.
\end{proof}

\subsection{Cell Diameter Upper Bound}

The following result gives an upper bound on the expected diameter of the leaf
cell containing the query $q$. In particular, it gives an upper bound on the distance
between $q$ and the point returned by defeatist search.
One notable property of this result is that it holds for any absolutely continuous
distribution on $[0,1]^d$. 

\begin{theorem} \label{thm:smallDiam}
There exists $c > 0$ such that the following holds.
Draw $n$ data points and a query $q$ iid from any absolutely continuous
distribution on $[0,1]^d$, and build a $k$-d tree with minimum leaf size $n_0$.
The expected $L^1$ diameter of the leaf cell containing $q$ is at most
\begin{align}
  6 d \left(\frac{n_0}{n}\right)^{1/d}
  \qquad\text{whenever}\qquad n_0 \ge c \log n.
\end{align}
\end{theorem}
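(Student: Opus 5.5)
The plan is to bound the expected side length of $q$'s leaf cell separately along each coordinate $j$. The $L^1$ diameter is the sum of these $d$ side lengths, so linearity of expectation finishes the argument. The mechanism is a one-dimensional "mass-weighted length" potential. For a node cell $C$ split along coordinate $j$ into children $C_L, C_R$ with $j$-th side lengths $\ell_L + \ell_R = \ell(C)$, the query $q$ is independent of the data. Hence, conditional on the tree, $q$ lands in a cell with probability equal to that cell's $\mu$-mass, and
\begin{align}
\mu(C_L)\,\ell_L + \mu(C_R)\,\ell_R \le \max\bigl(\mu(C_L),\mu(C_R)\bigr)\,\ell(C).
\end{align}
So if every split sends a $(\tfrac12 \pm \epsilon_l)$ fraction of the parent's mass to each child, the potential $\Phi_j = \sum_{C \text{ at level } l} \mu(C)\,\ell_j(C)$ shrinks by a factor $(1+2\epsilon_l)$ relative to the parent mass whenever coordinate $j$ is split. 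It does not grow at levels splitting other coordinates, since those levels only partition the mass. At the leaves, $\Phi_j = \E[\ell_j(C(q)) \mid T]$.

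The first step is the concentration statement replacing Lemma~\ref{lem:cellRegularity} for general $\mu$. Condition on everything determining a node $u$'s cell $C$ at level $l$. The roughly $2^{-l}n - 1$ non-split data points in $C$ are then iid from $\mu$ restricted to $C$. Apply the probability integral transform along coordinate $\alpha$, i.e.\ the conditional CDF of $x_\alpha$ under $\mu|_C$, which is continuous by absolute continuity. This makes the fraction of $\mu(C)$ falling left of the empirical median distributed exactly as the median of uniform samples on $[0,1]$. Lemma~\ref{lem:uniformMedian} then gives, with the same $\epsilon_l = c_1\sqrt{2^l n_0/n}$ as before, that the mass fraction lies in $\tfrac12 \pm \epsilon_l$ except with probability $3\exp(-c_1^2 n_0)$. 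A union bound over internal nodes gives a good event $A$ with $P(A^\complement) \le 3n\exp(-c_1^2 n_0)$.

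On $A$, iterating the potential bound yields
\begin{align}
\E[\ell_j(C(q)) \mid T] \le \prod_{l:\, \alpha_l = j} (1+2\epsilon_l)\, 2^{-k_j},
\end{align}
where $k_j \ge \lfloor \log(n/n_0)/d \rfloor$ is the number of round-robin splits on coordinate $j$. The product is at most $\exp(2\sum_l \epsilon_l) \le \exp(2c_1/(\sqrt2-1))$, a constant close to $1$ by the same geometric-series computation as in Lemma~\ref{lem:cellRegularity}. Also, $2^{-k_j} \le 2 (n_0/n)^{1/d}$. Summing over $j$ gives a contribution of roughly $2.2\, d (n_0/n)^{1/d}$ on $A$.

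Off $A$, the $L^1$ diameter is trivially at most $d$, contributing $3dn\exp(-c_1^2 n_0)$. Choosing $n_0 \ge c\log n$ with $c$ large makes this at most $d/n \le d(n_0/n)^{1/d}$. The total is then comfortably under $6d(n_0/n)^{1/d}$.

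The main obstacle I expect is making the conditioning rigorous. The point is that the potential recursion is applied to the actual random tree while the mass-fraction bounds hold only on $A$. I would handle this by writing $\E[\ell_j(C(q))] \le \E[\Phi_j^{\text{leaf}} \indicator{A}] + d\,P(A^\complement)$ and noting that on $A$ the deterministic recursion applies node by node. A secondary care point is that $\ell_j$ of a cell might be unchanged along coordinate $j$ while its mass shrinks at other levels. The inequality above handles this because $\Phi_j$ is linear in the masses: unsplit-coordinate levels redistribute mass without changing lengths.
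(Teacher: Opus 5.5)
Your proposal is correct and is essentially the paper's proof: your $\Phi_j$ is the paper's $A_j(l)$, your good event $A$ with its probability-integral-transform argument is Lemma~\ref{lem:approxUpperBoundEventF}, and the trivial bound by $d$ off the event is the same. The only cosmetic difference is that you multiply the per-coordinate factors and sum over $j$ at the end, whereas the paper groups $d$ consecutive levels into $A(l+d) \le (\tfrac{1}{2}+\delta_{l+d-1})A(l)$.

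One small fix concerns the fixed points in Lemma~\ref{lem:uniformMedian}. For a non-product $\mu$, every ancestor split point lying in the cell (up to $l$ of them) must be treated as fixed, not just one. This only worsens the constants and is absorbed by the hypothesis $n_0 \ge c\log n$, exactly as in the paper.
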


\begin{proof}
Let $\mu$ be the distribution of the data and the query.
Let $\Delta(C)$ be the $L^1$ diameter of a cell $C$ and $\Delta_i(C)$
be its $i$-th side length.
For each level $l \in \{0, 1, \ldots, \log(n/n_0)\}$ define 
\begin{align}
A(l) = \sum_{C \text{ at level } l} \mu(C) \Delta(C), \qquad
A_i(l) = \sum_{C \text{ at level } l} \mu(C) \Delta_i(C).
\end{align}
Given the $k$-d tree $T$, $A(l)$ is known but $q$ is still
distributed according to $\mu$. Since $T$ has depth
$\log(n/n_0)$, this means that $\E[\Delta(C(q)) \mid T] = A(\log(n/n_0))$,
and therefore
\begin{align}
\E[\Delta(C(q))] = \E[A(\log(n/n_0))].
\end{align}

Now we upper bound $\E[A(\log(n/n_0))]$. Let $c' = \frac{(\sqrt{2}-1)\ln 2}{2}$ and $\delta_l = c' \sqrt{\frac{n_0}{n 2^{-l}}}$.
For any non-leaf cell $C$, write $C_L$ and $C_R$ for its left and right children.
Let $F$ be the event that $\left|\frac{\mu(C_L)}{\mu(C)} - \frac{1}{2}\right| \le
\delta_l$ for every non-leaf cell $C$ at level $l$.

Let $j = (l \bmod d) + 1$ be the split coordinate at level $l$. For any
non-leaf cell $C$ at level $l$ with children $C_L, C_R$, we have $\Delta_j(C_L)
+ \Delta_j(C_R) = \Delta_j(C)$. Now suppose $F$ occurs. Then, 
\begin{align}
\mu(C_L) \Delta_j(C_L) + \mu(C_R) \Delta_j(C_R)
& \le \left(\tfrac{1}{2} + \delta_l\right) \mu(C) \bigl(\Delta_j(C_L) + \Delta_j(C_R)\bigr) \\
& = \left(\tfrac{1}{2} + \delta_l\right) \mu(C) \Delta_j(C). \label{eq:084}
\end{align}
As a result, $A_j(l+1) \le \left(\tfrac{1}{2} + \delta_l\right) A_j(l)$.
For $i \ne j$ we instead have $A_i(l+1) = A_i(l)$.
Combining these,
\begin{align}
A(l+1) = \sum_{i=1}^d A_i(l+1)
\le \left(\tfrac{1}{2} + \delta_l\right) A_j(l) + \sum_{i \ne j} A_i(l)
\label{eq:041}
\end{align}
After $d$ levels each coordinate has been split exactly once.
Thus, applying \eqref{eq:041} $d$ times
and using $\delta_{l+k} \le \delta_{l+d-1}$ for $k \le d-1$ yields
\begin{align}
A(l+d) \le \left(\tfrac{1}{2} + \delta_{l+d-1}\right) A(l) \label{eq:031}
\end{align}
Using $A(0) = d$ and repeatedly applying \eqref{eq:031},
\begin{align}
A(\log(n/n_0))
\le d \prod_{\substack{d \mid l \\ l+d \le \log(n/n_0)}}
\left(\tfrac{1}{2} + \delta_{l+d-1}\right) \label{eq:028}
\end{align}
whenever $F$ occurs.
Now note that
\begin{align}
\log\left(\prod_{\substack{d\mid l \\ l+d \le \log(n/n_0)}} (1/2 + \delta_{l+d-1})\right)
& = \sum_{\substack{d\mid l \\ l+d \le \log(n/n_0)}} \log(1/2 + \delta_{l+d-1}) \\
& \le \sum_{\substack{d\mid l \\ l+d \le \log(n/n_0)}} \left(-1 + \frac{2\delta_{l+d-1}}{\ln 2}\right) \label{eq:030} \\
& \le -\left\lfloor \frac{\log(n/n_0)}{d} \right\rfloor + \frac{2}{\ln 2}\sum_{i=0}^{\log(n/n_0)-1} \delta_i \label{eq:029}
\end{align}
where \eqref{eq:030} follows from $\log(1/2 + x) \le -1 + \frac{2x}{\ln 2}$
for $x \ge 0$.
Reindexing by $j = \log(n/n_0) - i$, we have $\sum_{i=0}^{\log(n/n_0) - 1}
\delta_i =  c' \sum_{j=1}^{\log(n/n_0)} \sqrt{2}^{-j} \le
\frac{c'}{\sqrt{2}-1}$. Hence,
\begin{align}
	\E[\Delta(C(q))]
	& = \E[A(\log(n/n_0))] \\
	& = \E[A(\log(n/n_0)) \indicator{F}] + \E[A(\log(n/n_0)) \indicator{F^\complement}] \\
	& \le \E[A(\log(n/n_0)) \indicator{F}] + d \, P(F^\complement) \label{eq:042} \\
	& \le d \cdot 2^{-\left\lfloor \frac{\log(n/n_0)}{d}
	\right\rfloor + \frac{2 c'}{(\sqrt{2}-1)\ln 2}} + 2 d n \exp\left(-\tfrac{c'^2}{4} n_0\right) \label{eq:043} \\
	& \le 4 d \left(\frac{n_0}{n}\right)^{1/d} + 2 d n \exp\left(-\tfrac{c'^2}{4} n_0\right) \label{eq:032} \\
	& \le 6 d \left(\frac{n_0}{n}\right)^{1/d}, \label{eq:033b}
\end{align}
where \eqref{eq:042} uses $A(\log(n/n_0)) \le d$;
\eqref{eq:043} follows from \eqref{eq:028}, \eqref{eq:029}, and Lemma~\ref{lem:approxUpperBoundEventF};
\eqref{eq:032} uses $\frac{2c'}{(\sqrt{2}-1)\ln 2} = 1$ and $\lfloor x \rfloor
\ge x - 1$;
and \eqref{eq:033b} follows from $c \ge 2\ln 2 \cdot (2/c')^2$, which gives
$n \exp(-\tfrac{c'^2}{4} n_0) \le n^{-1} \le (n_0/n)^{1/d}$.
\end{proof}
\begin{lemma} \label{lem:approxUpperBoundEventF}
Draw $n$ data points iid from an absolutely continuous distribution,
and build a $k$-d tree with minimum leaf size $n_0 \ge \frac{4}{c} \log n$ for
any constant $c \in (0, 1]$.
Let $\delta_l = c \sqrt{\frac{n_0}{n 2^{-l}}}$.
The probability that $\left|\frac{\mu(C_L)}{\mu(C)} - \frac{1}{2}\right| \le
\delta_l$ for every non-leaf cell $C$ at level $l$ with left child $C_L$ is
at least
\begin{align}
1 - 2n \exp\left( - \frac{c^2}{4} n_0 \right).
\end{align}
\end{lemma}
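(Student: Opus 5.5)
Fix a non-leaf cell $C$ at level $l$; it contains $N = n2^{-l}$ data points, and we may take $N \ge 2n_0$. The plan is to reduce to one dimension and use the probability integral transform. Condition on the tree above $C$, i.e.\ on the splits that define $C$ and on which points land in $C$. Let $j$ be the split coordinate at level $l$. The points in $C$ that were not previous split points along coordinate $j$ number at least $N - l \ge N - \log n$. Conditionally, they are iid from $\mu$ restricted to $C$.

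Let $G(x) = \mu(C \cap \{y_j \le x\})/\mu(C)$. Since $\mu$ is absolutely continuous, $G$ is continuous, so $U_i = G(p_{i,j})$ are iid uniform on $[0,1]$. The ratio $\mu(C_L)/\mu(C)$ equals $G(s)$, where $s$ is the median coordinate. The earlier split points along $j$ sit at the boundary of $C$ and therefore map to $0$ or $1$; there are at most $\log n$ of them.

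The claim is that $G(s)$ is an order statistic of roughly the $N/2$-th rank among at least $N - \log n$ uniforms, shifted by at most $\log n$ positions. I will apply Lemma~\ref{lem:uniformMedian}, or directly Hoeffding/DKW on the empirical CDF of the $U_i$. The event $G(s) > 1/2 + \delta_l$ requires that fewer than about $N/2$ of the uniforms fall below $1/2 + \delta_l$. By Hoeffding, this has probability at most
\begin{align}
\exp\bigl(-2m(\delta_l - \log n/m)^2\bigr), \qquad m \ge N/2.
\end{align}
The lower tail is symmetric. Here $\delta_l^2 N = c^2 n_0$. The hypothesis $n_0 \ge \frac{4}{c}\log n$ gives
\begin{align}
\frac{\log n}{N} \le \frac{\log n}{2n_0} \le \frac{c}{8} \le \frac{\delta_l}{2},
\end{align}
since $\delta_l \ge c\sqrt{n_0/N}$ is not too small relative to $\log n/N$. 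Checking this carefully uses $N \ge n_0$ together with $c \le 1$. The shift can then be absorbed, leaving the exponent
\begin{align}
2\cdot\frac{N}{2}\cdot\frac{\delta_l^2}{4} = \frac{c^2 n_0}{4}.
\end{align}

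Finally, take a union bound over at most $n$ internal nodes, each contributing $2\exp(-c^2 n_0/4)$. The main obstacle is the bookkeeping of the rank shift caused by previous split points and the integer rounding of the median. Keeping those constant losses inside the factor $1/4$ is where I expect to be fiddly: I must verify $\log n/N \le \delta_l/2$ precisely. If that fails in the extreme regime, I would use the looser observation that the split points are only $O(\log n)$ out of $N \ge 2n_0$ points.
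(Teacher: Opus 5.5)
Your route is the paper's. The probability integral transform turns $\mu(C_L)/\mu(C)$ into the median of at least $N-l$ iid uniforms plus at most $l\le\log n$ other points. Lemma~\ref{lem:uniformMedian} (Hoeffding) controls that median, the rank shift is absorbed using $n_0\ge\frac{4}{c}\log n$, and a union bound over fewer than $n$ internal nodes finishes.

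The step that fails as written is the chain $\frac{\log n}{N}\le\frac{\log n}{2n_0}\le\frac{c}{8}\le\frac{\delta_l}{2}$. The last inequality is equivalent to $N\le 16n_0$, so it fails at every level with $N>16n_0$. At the root, for instance, $\delta_0=c\sqrt{n_0/n}$ is far below $c/4$. Replacing $\log n/N$ by a constant throws away exactly the $N$-dependence you need. Keep $N$ in the denominator instead:
\[
\frac{\log n}{N}\le\frac{c\,n_0}{4N}\le\frac{c}{4}\sqrt{\frac{n_0}{N}}=\frac{\delta_l}{4},
\]
which uses only $n_0\le N$.

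You actually need this $\delta_l/4$, not the $\delta_l/2$ you named as your target. Your Hoeffding shift is $\log n/m$ with $m\ge N/2$, so it is at most $2\log n/N\le\delta_l/2$. That gives exponent $2m(\delta_l/2)^2\ge N\delta_l^2/4=c^2n_0/4$, as required. With the lemma's sharper shift $l/(2m)$, the bound $\delta_l/2$ would already suffice; that is what the paper does.

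The hypothesis $c\le 1$ enters elsewhere than you suggested. It gives $N\ge 2n_0\ge 8\log n$, hence $m\ge N-l\ge N/2$. Your fallback, that the split points are only $O(\log n)$ out of $N$, is not needed.

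There is also a smaller imprecision. The points that are conditionally iid from $\mu$ restricted to $C$ are those that were not split points of \emph{any} ancestor, not merely those not split along coordinate $j$. An earlier split point along some $j'\neq j$ lies on a face of $C$. For a general absolutely continuous $\mu$, its $j$-th coordinate need not follow the $\mu|_C$ marginal, and it does not map to $0$ or $1$ under $G$. This costs nothing, because the counting argument behind Lemma~\ref{lem:uniformMedian} is pathwise and worst-case over the $k$ non-uniform points. So all $\le l$ ancestor split points can be treated as adversarial, which is exactly how the paper handles them.
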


\begin{proof}
Consider a node $u$ at depth $l$, which contains $n2^{-l}$ points. At most $l$ of
these were split points for ancestor nodes. The remaining points are
independent draws from $\mu$ restricted to the cell $C$ of $u$. 
By the probability integral
transform, $\frac{\mu(C_L)}{\mu(C)}$ is distributed as the
median of $n2^{-l}$ points in $[0,1]$, where at least $n2^{-l}-l$ are independent uniform random
variables.
Combining this with Lemma \ref{lem:uniformMedian},
\begin{align}
	P\left(\left|\tfrac{\mu(C_L)}{\mu(C)} - 1/2\right| > \delta_l\right)
	\le 2 \exp\left( -2 (n2^{-l} - l) \left(\delta_l - \frac{l}{2(n2^{-l}-l)}\right)^2\right).
\end{align}
Because $n_0 \ge \frac{2}{c}\log n$, we have $\delta_l \ge c
\sqrt{\frac{n_0}{n 2^{-l}}} \ge \frac{2\log n}{n 2^{-l}}$.
Furthermore, $2(n2^{-l} - l) \ge n2^{-l}$ since $n_0 \ge \frac{2}{c} \log n \ge 2\log n$ (using $c \le 1$).
As a result, $\frac{l}{2(n2^{-l}-l)} \le \frac{2\log n}{n 2^{-l}} \le \delta_l$, which
 means $\frac{l}{2(n2^{-l}-l)} \le \frac{\delta_l}{2}$ when $n_0 \ge \frac{4}{c}\log n$.
Combining this with Lemma \ref{lem:uniformMedian},
\begin{align}
	P\left(\left|\tfrac{\mu(C_L)}{\mu(C)} - \tfrac{1}{2}\right| > \delta_l\right)
	\le 2 \exp\left(-\tfrac{1}{4} n2^{-l} \delta_l^2\right)
	= 2 \exp\left(-\frac{c^2}{4} n_0 \right) .
\end{align}
Taking a union bound over $< n$ internal nodes completes the proof.
\end{proof}

\section{Technical Lemmas}

\begin{lemma} \label{lem:diameterBoundUsingAspectRatio}
Let $R$ be a rectangle in $\R^d$ with volume $V$, aspect ratio $\alpha$, and $L^2$ diameter $\Delta$.
Then,
$$\Delta \le \alpha \sqrt{d} V^{1/d} .$$
\end{lemma}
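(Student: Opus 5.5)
Write the side lengths of $R$ as $L_1, \dots, L_d$, and set $L_{\min} = \min_i L_i$ and $L_{\max} = \max_i L_i$. By definition of the aspect ratio, $L_{\max} = \alpha L_{\min}$. The plan is to bound the diameter from above in terms of $L_{\max}$, and then bound $L_{\max}$ from above by $V^{1/d}$ using the volume.

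First, the $L^2$ diameter of an axis-aligned rectangle is the length of its main diagonal:
\begin{align}
\Delta = \Bigl(\sum_{i=1}^d L_i^2\Bigr)^{1/2} \le \sqrt{d}\, L_{\max} = \alpha \sqrt{d}\, L_{\min}.
\end{align}
Second, every side is at least $L_{\min}$, so $V = \prod_{i=1}^d L_i \ge L_{\min}^d$, which gives $L_{\min} \le V^{1/d}$. Combining the two steps yields $\Delta \le \alpha\sqrt{d}\, V^{1/d}$, as claimed.

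Neither step is a real obstacle; the only care needed is to route the bound through $L_{\min}$ rather than $L_{\max}$. Bounding $L_{\max}$ directly by the volume would require the weaker inequality $L_{\max} \le \alpha^{(d-1)/d} V^{1/d}$. That route also works, but it gives an exponent on $\alpha$ different from the one in the statement.
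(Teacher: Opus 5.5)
Your proof is correct and is essentially the paper's argument: both bound $\Delta \le \sqrt{d}\,L_{\max}$ and then control the side lengths through the volume. The only difference is that the paper takes the route you set aside. It uses $V \ge L_{\max} L_{\min}^{d-1}$ to get $L_{\max} \le \alpha^{(d-1)/d} V^{1/d}$, then relaxes this to $\alpha V^{1/d}$ via $\alpha \ge 1$. So that route does reach the stated bound, and is slightly sharper before the relaxation, whereas yours gets there in one step through $L_{\min} \le V^{1/d}$.
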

\begin{proof}
Let $s_1, \dots, s_d$ be the side lengths of $R$.
Let $s_{\max} = \max_i s_i$ and $s_{\min} = \min_i s_i = s_{\max}/\alpha$. Then,
$$ V = \prod_{i=1}^d s_i \ge s_{\max} \cdot (s_{\min})^{d-1} = s_{\max} \cdot
\left(\frac{s_{\max}}{\alpha}\right)^{d-1} = \frac{s_{\max}^d}{\alpha^{d-1}}. $$
Solving for $s_{\max}$ gives
$$ s_{\max} \le \left( V \alpha^{d-1} \right)^{1/d} 
\le  \alpha V^{1/d}.$$
As a result,
$$ \Delta \le \sqrt{d} s_{\max} \le \alpha \sqrt{d} V^{1/d} $$
\end{proof}

\begin{lemma} \label{lem:uniformMedian}
  Let $M_{n,k}$ be the median of $n$ points, where $k$ are fixed and $n-k$ are independent uniform random variables on $[0, t]$.
  If $\frac{\delta}{t} > \frac{k}{2(n-k)}$, then
  $$ P(|M_{n,k} - t/2| \ge \delta) \le 2\exp\left(-2(n-k)\left(\frac{\delta}{t} - \frac{k}{2(n-k)}\right)^2\right). $$
\end{lemma}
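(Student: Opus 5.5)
The plan is to reduce to the case $t = 1$ by rescaling, replacing each uniform on $[0,t]$ by $t$ times a uniform on $[0,1]$. The median then scales by $t$, and the event $|M_{n,k} - t/2| \ge \delta$ becomes $|M' - 1/2| \ge \delta/t$. So it suffices to treat $t=1$ and write $\epsilon = \delta$. By symmetry ($x \mapsto 1-x$ maps the fixed points to other fixed points and preserves the uniforms), I will bound the upper tail $P(M \ge 1/2 + \epsilon)$ and the lower tail $P(M \le 1/2 - \epsilon)$ in the same way. Adding the two tails gives the factor $2$.

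For the upper tail, the median is the $\lceil n/2\rceil$-th order statistic. The event $M \ge 1/2+\epsilon$ forces at most $\lceil n/2\rceil - 1$ of the $n$ points to lie strictly below $1/2+\epsilon$; more precisely, fewer than $\lceil n/2 \rceil$ points are $< 1/2 + \epsilon$. In the worst case all $k$ fixed points lie below, so the number $N$ of uniforms below $1/2+\epsilon$ satisfies
\[
N \le \lceil n/2\rceil - 1 - 0 \le n/2 .
\]
A sharper accounting should not be needed. Actually the bound should be relative to the worst case for the fixed points, which is that they all lie above. Then the condition only yields $N \le \lceil n/2\rceil - 1 \le n/2$. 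Here $N \sim \mathrm{Bin}(n-k, 1/2+\epsilon)$, with mean $(n-k)(1/2+\epsilon)$. So the event requires
\[
N - \E N \le n/2 - (n-k)(1/2+\epsilon) = -(n-k)\Bigl(\epsilon - \tfrac{k}{2(n-k)}\Bigr).
\]
The hypothesis $\epsilon > k/(2(n-k))$ makes this deviation negative. Hoeffding's inequality then gives the bound $\exp\bigl(-2(n-k)(\epsilon - \tfrac{k}{2(n-k)})^2\bigr)$.

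The lower tail is analogous. The event $M \le 1/2-\epsilon$ means at least $\lceil n/2\rceil \ge n/2$ points are $\le 1/2-\epsilon$. Taking the worst case where all $k$ fixed points lie below, at least $n/2 - k$ uniforms must lie there. Their count has mean $(n-k)(1/2-\epsilon)$, so the required deviation is
\[
n/2 - k - (n-k)(1/2-\epsilon) = (n-k)\Bigl(\epsilon - \tfrac{k}{2(n-k)}\Bigr),
\]
and Hoeffding gives the same bound.

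The main care point is the off-by-one bookkeeping with $\lceil n/2\rceil$ and the worst-case placement of the $k$ fixed points. The thing to verify is that in both tails the threshold is at worst $n/2$ (respectively $n/2 - k$), so that the shift is exactly $k/(2(n-k))$. The non-strict versus strict inequalities at $1/2 \pm \epsilon$ do not matter, since the uniforms are continuous.
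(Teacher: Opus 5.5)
Your proof is correct and is essentially the paper's argument. You split into the two tails, use the $\lceil n/2\rceil$-th order statistic with the fixed points placed adversarially to turn each tail into a count of uniforms on one side of $t/2 \pm \delta$, and apply Hoeffding with deviation $\delta/t - \frac{k}{2(n-k)}$. Your upper-tail count of uniforms below $1/2+\epsilon$ is simply the complement of the paper's count of uniforms above it, so the events coincide. Delete the stray first claim that the worst case has all fixed points below $1/2+\epsilon$: as you then note, the binding case is all of them above, which gives $N \le \lceil n/2\rceil - 1 \le n/2$.
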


\begin{proof}
  Let $S$ be the set of $n$ points. Let the fixed points be $y_1, \dots, y_k$ and the random points be $X_1, \dots, X_{n-k}$. Let $m = n-k$.

  First consider the upper tail.
  In the worst case, all $k$ fixed points are at least $t/2 + \delta$.
  If $M_{n,k} \ge t/2 + \delta$, then at least $\lceil (n+1)/2 \rceil$ points in $S$ must be $\ge t/2 + \delta$,
  so at least $\lceil (n+1)/2 \rceil - k$ random points must be $\ge t/2 + \delta$.
  Since $\lceil (n+1)/2 \rceil \ge (n+1)/2$,
  \begin{align}
  	\frac{1}{m}\sum_{i=1}^{m} \indicator{\{X_i \ge t/2 + \delta\}}
	\ge \frac{\lceil (n+1)/2 \rceil - k}{m}
	\ge \frac{(n+1)/2 - k}{m}
	= \frac{1}{2} - \frac{k-1}{2m}
	\ge \frac{1}{2} - \frac{k}{2m}.
  \end{align}
  Since $\E\left[\frac{1}{m}\sum_{i=1}^{m} \indicator{\{X_i \ge t/2 +
  \delta\}}\right] = \frac{1}{2} - \frac{\delta}{t}$,
  applying Hoeffding's inequality with deviation $\epsilon = \frac{\delta}{t} - \frac{k}{2m}$ gives
  \begin{align}
      P(M_{n,k} \ge t/2 + \delta)
      &\le P\left(\frac{1}{m}\sum_{i=1}^{m} \indicator{\{X_i \ge t/2 + \delta\}} \ge \frac{1}{2} - \frac{k}{2m} \right) \\
      &\le \exp\left(-2m\left(\frac{\delta}{t} - \frac{k}{2m}\right)^2\right).
  \end{align}

  Now consider the lower tail.
  In the worst case, all $k$ fixed points are at most $t/2 - \delta$.
  If $M_{n,k} \le t/2 - \delta$, then at least $\lceil n/2 \rceil$ points in $S$ must be $\le t/2 - \delta$,
  so at least $\lceil n/2 \rceil - k$ random points must be $\le t/2 - \delta$.
  Since $\lceil n/2 \rceil \ge n/2$,
  \begin{align}
  	\frac{1}{m}\sum_{i=1}^{m} \indicator{\{X_i \le t/2 - \delta\}}
	\ge \frac{\lceil n/2 \rceil - k}{m}
	\ge \frac{n/2 - k}{m}
	= \frac{1}{2} - \frac{k}{2m}.
  \end{align}
  Since $\E\left[\frac{1}{m}\sum_{i=1}^{m} \indicator{\{X_i \le t/2 -
  \delta\}}\right] = \frac{1}{2} - \frac{\delta}{t}$,
  applying Hoeffding's inequality with deviation $\epsilon = \frac{\delta}{t} - \frac{k}{2m}$ gives
  \begin{align}
      P(M_{n,k} \le t/2 - \delta)
      &\le P\left(\frac{1}{m}\sum_{i=1}^{m} \indicator{\{X_i \le t/2 - \delta\}} \ge \frac{1}{2} - \frac{k}{2m} \right) \\
      &\le \exp\left(-2m\left(\frac{\delta}{t} - \frac{k}{2m}\right)^2\right).
  \end{align}

  Combining both tails completes the proof.
\end{proof}

\bibliographystyle{plain}
\bibliography{refs.bib}

\begin{thebibliography}{10}

\bibitem{AryaMNSW1998}
Sunil Arya, David~M. Mount, Nathan~S. Netanyahu, Ruth Silverman, and Angela~Y. Wu.
\newblock An optimal algorithm for approximate nearest neighbor searching fixed dimensions.
\newblock {\em Journal of the ACM}, 45(6):891--923, 1998.

\bibitem{Bentley1975}
Jon~Louis Bentley.
\newblock Multidimensional binary search trees used for associative searching.
\newblock {\em Communications of the ACM}, 18(9):509--517, 1975.

\bibitem{Birnbaum1942}
Zygmunt~W. Birnbaum.
\newblock An inequality for {M}ill's ratio.
\newblock {\em The Annals of Mathematical Statistics}, 13(2):245--246, 1942.

\bibitem{ChenFangShao2013}
Louis H.~Y. Chen, Xiao Fang, and Qi-Man Shao.
\newblock From stein identities to moderate deviations.
\newblock {\em The Annals of Probability}, 41(1):262--293, 2013.

\bibitem{DasguptaF2008}
Sanjoy Dasgupta and Yoav Freund.
\newblock Random projection trees and low dimensional manifolds.
\newblock In {\em Proceedings of the 40th Annual ACM Symposium on Theory of Computing (STOC)}, pages 537--546, 2008.

\bibitem{FriedmanBF1977}
Jerome~H. Friedman, Jon~Louis Bentley, and Raphael~A. Finkel.
\newblock An algorithm for finding best matches in logarithmic expected time.
\newblock {\em ACM Transactions on Mathematical Software}, 3(3):209--226, 1977.

\bibitem{MujaL2014}
Marius Muja and David~G. Lowe.
\newblock Scalable nearest neighbor algorithms for high dimensional data.
\newblock {\em IEEE Transactions on Pattern Analysis and Machine Intelligence}, 36(11):2227--2240, 2014.

\bibitem{NeneN1997}
Sameer~A. Nene and Shree~K. Nayar.
\newblock A simple algorithm for nearest neighbor search in high dimensions.
\newblock {\em IEEE Transactions on Pattern Analysis and Machine Intelligence}, 19(9):989--1003, 1997.

\bibitem{Panigrahy2008}
Rina Panigrahy.
\newblock An improved algorithm finding nearest neighbor using kd-trees.
\newblock In {\em LATIN 2008: Theoretical Informatics}, volume 4957 of {\em LNCS}, pages 387--398, 2008.

\bibitem{RamS2019}
Parikshit Ram and Kaushik Sinha.
\newblock Revisiting kd-tree for nearest neighbor search.
\newblock In {\em Proceedings of the 25th ACM SIGKDD Conference on Knowledge Discovery and Data Mining}, pages 1378--1388, 2019.

\bibitem{sklearnKDTree}
{scikit-learn developers}.
\newblock {\tt sklearn.neighbors.KDTree} documentation.
\newblock \url{https://scikit-learn.org/stable/modules/generated/sklearn.neighbors.KDTree.html}.
\newblock Accessed 2026-04-30.

\bibitem{Sproull1991}
Robert~F. Sproull.
\newblock Refinements to nearest-neighbor searching in $k$-dimensional trees.
\newblock {\em Algorithmica}, 6:579--589, 1991.

\bibitem{Vempala2012}
Santosh~S. Vempala.
\newblock Randomly-oriented $k$-d trees adapt to intrinsic dimension.
\newblock In {\em Proceedings of the 32nd International Conference on Foundations of Software Technology and Theoretical Computer Science (FSTTCS)}, pages 48--57, 2012.

\bibitem{WeberSB1998}
Roger Weber, Hans-J\"org Schek, and Stephen Blott.
\newblock A quantitative analysis and performance study for similarity-search methods in high-dimensional spaces.
\newblock In {\em Proceedings of the 24th International Conference on Very Large Data Bases (VLDB)}, pages 194--205, 1998.

\end{thebibliography}
\end{document}